\def\ps@headings{%
\def\@oddhead{\mbox{}\scriptsize\rightmark \hfil \thepage}%
\def\@evenhead{\scriptsize\thepage \hfil \leftmark\mbox{}}%
\def\@oddfoot{}%
\def\@evenfoot{}}
\makeatother \pagestyle{headings}
\newtheorem{theorem}{Theorem}
\newtheorem{proposition}{Proposition}
\newtheorem{lemma}{Lemma}
\begin{document}

\title{Ergodic Achievable Rate Maximization of RIS-assisted Millimeter-Wave MIMO-OFDM Communication Systems}

\author{\authorblockN{Renwang Li, Shu Sun, \IEEEmembership{Member,~IEEE}, Meixia Tao, \IEEEmembership{Fellow,~IEEE}}\\
\thanks{This work is supported  by the National Key R\&D Project of China under grant 2020YFB1406802. (Corresponding authors: Shu Sun, Meixia Tao.)}	
\thanks{The authors are with Department of Electronic Engineering, Shanghai Jiao Tong University, Shanghai, China (emails:\{renwanglee, shusun, mxtao\}@sjtu.edu.cn).}
}

\maketitle

\begin{abstract}
Reconfigurable intelligent surface (RIS) has attracted extensive attention in recent years. However, most research focuses on the scenario of the narrowband and/or instantaneous channel state information (CSI), while wide bandwidth with the use of millimeter-wave (mmWave) (including sub-Terahertz) spectrum is a major trend in next-generation wireless communications, and statistical CSI is more practical to obtain in realistic systems. Thus, we {consider} the ergodic achievable rate of RIS-assisted mmWave multiple-input multiple-output  orthogonal frequency division multiplexing communication systems. The widely used Saleh-Valenzuela channel model is adopted to  characterize the mmWave channels and only the statistical CSI is available. We first derive the approximations of the ergodic achievable rate by means of the majorization theory and Jensen's inequality. Then, an alternating optimization based algorithm is proposed to  maximize the ergodic achievable rate by jointly designing the transmit covariance matrix at the base station and the reflection coefficients at the RIS. Specifically, the design of the transmit covariance matrix is transformed into a power allocation problem and solved by spatial-frequency water-filling. The reflection coefficients are optimized by the Riemannian conjugate gradient algorithm. {Simulation results corroborate  the effectiveness of the proposed algorithms.}
\end{abstract}

\begin{IEEEkeywords}
Reconfigurable intelligent surface, ergodic achievable rate, statistical channel state information (CSI), orthogonal frequency division multiplexing (OFDM), transmit covariance matrix, reflection coefficients.
\end{IEEEkeywords}

\section{Introduction}
The millimeter wave (mmWave) communication over the 30-300 GHz spectrum is one of the most promising techniques for 5G-and-beyond systems \cite{6515173, 7400949}. However, the free-space path loss is more severe at mmWave compared to the conventional microwave bands. Typically, multiple-input multiple-output (MIMO) beamforming technology is employed to provide high gains to extend the transmission distance, but the high directivity makes the mmWave communication more sensitive to signal blockage. Meanwhile, MIMO technology greatly increases the consumption of power and cost. Recently, one promising and cost-effective solution to overcome these issues is to deploy reconfigurable intelligent surfaces (RISs) \cite{8910627,9122596,9229054}. An RIS is an artificial uniform planar array (UPA) with plenty of elements, each of which can independently impose a phase shift on the incident signal and then reflect or refract it passively with the assistance of a smart controller. Hence, by adaptively adjusting the coefficients, RIS can be controlled to enhance the transmission quality of the desired signals. RIS is spectrum- and energy- efficient since it does not require radio frequency components. In addition, RIS can be flexibly and widely deployed so as to enhance the coverage of the mmWave communication.

Motivated by the above promising advantages, RIS has attracted extensive attention in both academia and industry, e.g., \cite{8811733,8982186,9110912,9234098, huawei}. The authors in \cite{8811733} consider a power minimization problem under multiple-input single-ouput (MISO) scenario and propose a semidefinite relaxation based algorithm to jointly optimize the active and passive beamforming, while the weighted sum-rate maximization problem is studied in \cite{8982186}. The authors in \cite{9110912} focus on the capacity maximization problem under MIMO scenario and propose an alternating optimization (AO) based algorithm. In \cite{9234098}, the inherent sparse feature of the mmWave channels is exploited to find an efficient algorithm to jointly design the transceiver and RIS. It is mentioned in \cite{huawei} that the beamforming at the BS and the reflection coefficients at the RIS can be simultaneously optimized to improve the system performance.

However, all of the above works mainly focus on the narrowband communication systems, while 5G and future 6G communications are likely to conduct wideband deployment. Two main effects, i.e., spatial-wideband effect and frequency-wideband effect, emerge in mmWave MIMO wideband systems, which will dramatically affect the system performance \cite{8354789,ning2021prospective}. {The spatial-wideband effect refers to the phenomenon of a non-negligible time delay across the array aperture for the same symbol in wideband systems. To combat the spatial-wideband effect, orthogonal frequency division multiplexing (OFDM)  is a promising technology that divides the baseband into several sub-bands so that each sub-band can be considered frequency-independent \cite{8354789,ning2021prospective}.} As for the frequency-wideband effect which is also known as beam squint effect, there are no effective solutions thus far to the authors' best knowledge, especially for RIS \footnote{{The beam squint effect has been addressed in some literature, e.g., \cite{9409636, 9417413}. In order to mitigate the influence of the beam squint effect, the authors in \cite{9409636} propose a twin-stage orthogonal matching pursuit algorithm during channel estimation, while the authors in \cite{9417413} propose an efficient beamforming algorithm to maximize the achievable rate. Nevertheless, a general method to address the beam squint effect is still absent.}}. Specifically, the array response vectors differ across frequencies, notably for large bandwidth systems. However, RIS can only impose the same phase shifts on different frequencies since it is applied in the time domain and lacks the ability of digital signal processing \cite{9039554}. Some work has been devoted to investigating the wideband RIS-assisted communication systems \cite{9039554,8964457,8937491,9610122, 9417413, 9120639, nuti2021spectral, 8964330, hong2022hybrid, 9459505, 9685734, 9520295 }.
For the single-input single-output (SISO) scenario, the authors in \cite{9039554} aim at the achievable rate maximization and propose a successive convex approximation  based method for jointly power allocation and reflection coefficient optimization, while the problem of maximizing the minimum rate of all users is considered in \cite{8964457}. {The authors in \cite{8937491} first execute channel estimation, and then maximize the average achievable rate based on the strongest signal path. The authors in \cite{9610122} propose a low-complexity majorization-minimization-based algorithm to efficiently maximize the achievable rate. } For the MISO scenario, the authors in \cite{9417413} focus on mitigating the beam squint effect and propose low-complexity solutions for both line-of-sight (LoS) and non-LoS scenarios. Multi-user MISO scenario is studied in  \cite{9120639}, where the original sum rate maximization problem is reformulated as a modified mean square error  minimization problem, followed by a block coordinate descent iterative algorithm. For the MIMO scenario, the authors in \cite{nuti2021spectral} aim at maximizing the spectral efficiency for point-to-point communication, while hybrid digital and analog beamforming is considered in \cite{8964330,hong2022hybrid}. The weighted sum-rate maximization problem is studied in \cite{9459505}, where single data stream transmission is considered and the fractional programming is adopted to decouple the original problem. In addition, the secrecy rate maximization problem is investigated in \cite{9685734} and an AO-based inexact block coordinate descent algorithm by leveraging Lagrange multiplier and complex circle manifold methods is proposed, while the discrete reflecting phase shift case is extended in \cite{9520295}.

However, all of the above contributions are based on the instantaneous channel state information (CSI), which is pretty challenging to acquire in practice since RIS is usually nearly passive and composed of a large number of unit cells \cite{8910627,9122596,9229054}. An attractive alternative is to explore the statistical CSI. Therefore, we concentrate on a point-to-point RIS-assisted mmWave MIMO-OFDM communication system by exploiting the statistical CSI  where the RIS is adopted as a reflecting surface.  The main contributions of this study are summarized as follows:
\begin{itemize}
  \item  We focus on maximizing the ergodic achievable rate of the RIS-assisted mmWave MIMO-OFDM communication system. The widely used Saleh-Valenzuela (SV) channel model is adopted to characterize the mmWave channels and only the statistical CSI is assumed to be available. To the best of the authors' knowledge, this is the first effort to consider the statistical CSI in the broadband RIS-assisted mmWave MIMO-OFDM communication systems.
  \item We derive closed-form approximations of the ergodic achievable rate by means of the majorization theory and Jensen's inequality. The results show that the ergodic achievable rate increases logarithmically with the number of antennas at the base station (BS) and the user, the number of  reflection units at the RIS, the power allocation at the BS, and the eigenvalues of the steering matrices associated with the BS, RIS and user.
  \item The ergodic achievable rate is maximized by jointly designing the transmit covariance matrix at the BS and the reflection coefficients at the RIS. Specifically, the design of the transmit covariance matrix is transformed into a power allocation problem and solved by spatial-frequency water-filling. The reflection coefficients are optimized by the Riemannian conjugate gradient (RCG) algorithm.
\end{itemize}

Furthermore, we conduct extensive simulations to validate the performance of the proposed algorithms. {It is shown that our proposed alternating AO outperforms the benchmark scheme from the literature \cite{9234098} .} In addition, the ergodic achievable rate after optimization can be improved by about 25 bps/Hz, which emphasizes the importance of the optimization of the transmit covariance matrix at the BS and the reflection coefficients at the RIS.
Additionally, this work is a substantial improvement from {our previous work \cite{li2022risassisted}}. On the one hand, this work takes the direct BS-user link into consideration, while it is assumed to be blocked due to unfavorable propagation conditions in \cite{li2022risassisted}. When the direct link exists, the majorization theory cannot be applied directly because the cross product term of the ergodic achievable rate is no longer a positive semidefinite Hermitian matrix. Fortunately, the asymptotic orthogonality of the array response vectors can be utilized to overcome this hurdle. On the other hand, this work extends the narrowband scenario in \cite{li2022risassisted} into the broadband systems. The spatial- and frequency- wideband effects are mitigated in our proposed algorithm.

The rest of the paper is organized as follows. Section II presents the system model and the problem formulation. The approximations of the ergodic achievable rate are derived in Section III. An alternating optimization algorithm for maximizing the ergodic achievable rate is proposed in Section IV. The comparison with the state-of-the-art algorithms is discussed in Section V. Simulation results are provided in Section VI. Conclusions are drawn in Section VII.

\emph{Notations}: The imaginary unit is denoted by $j=\sqrt{-1}$.  Vectors and matrices are denoted by bold-face lower-case and upper-case letters, respectively. The conjugate, transpose and conjugate transpose of the vector $\bf x$ are denoted by $ \bf x^*$, $\mathbf{x}^T$ and $\mathbf x^H$, respectively. The $\odot$ symbol is the element-wise product and $\mathbb{E}(\cdot)$ is the expectation operation. The $\operatorname{tr}(\cdot)$, $\operatorname{det}(\cdot)$ and $\operatorname{rank}(\cdot)$ denote the trace, determinant and rank operation, respectively. The $\Re \{\cdot\}$ operation extracts the real value of a complex variable. The $\operatorname{unt}(\mathbf{x})$ operation represents an $N$-dimensional vector with elements $\frac{x_1}{|x_1|}, \cdots, \frac{x_N}{|x_N|}$. The distribution of a circularly symmetric complex Gaussian  random vector with mean vector $\mu$ and covariance matrix $\Sigma$ is denoted by $\mathcal{C}\mathcal{N}(\mu,\Sigma)$; and $\sim$ stands for ``distributed as''. The exponential random variable ${X}$ with parameter $\lambda$ is given by ${X} \sim \exp (\lambda)$.

\section{System Model and Problem Formulation}
In this section, we will first introduce the system model, then describe the widely adopted mmWave channel model, and finally propose the considered problem of maximizing the ergodic achievable rate.
\subsection{System Model}
\begin{figure}[t]
\begin{centering}
%\vspace{-0.3cm}
\includegraphics[width=.45\textwidth]{./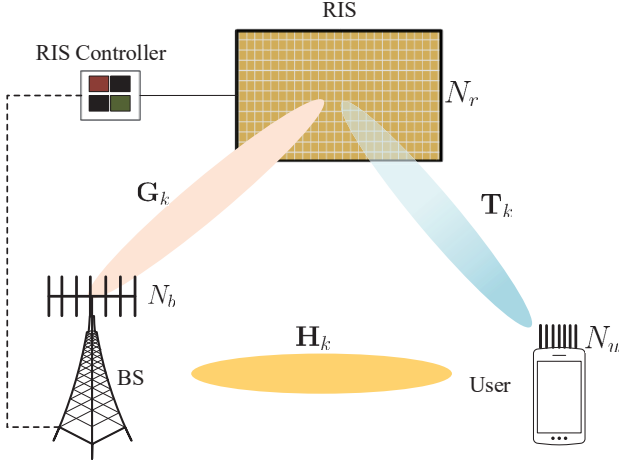}
 \caption{Illustration of an RIS-aided mmWave MIMO-OFDM system.} \label{system_model}
\end{centering}
\vspace{-0.5cm}
\end{figure}

We consider a downlink wideband point-to-point mmWave MIMO communication system as illustrated in Fig.~\ref{system_model}, where the BS is equipped with $N_b$ antennas and transmits signals to a user with $N_u$ antennas, via the help of one $N_r$-element RIS. In order to mitigate the spatial-wideband effect of the wideband mmWave channel, OFDM with $K$ subcarriers is employed to modulate the signal at the BS \cite{8354789, ning2021prospective}. Let $\mathbf{s}_k\in \mathbb{C}^ {N_b \times 1}$ denote the transmitted signal vector at subcarrier $k$. The transmit covariance matrix at subcarrier $k$ is defined as $\mathbf{Q}_k \triangleq \mathbb{E} \{\mathbf{s}_k \mathbf{s}_k^H \}\in \mathbb{C}^{N_b \times N_b} $, with $\mathbf{Q}_k \succeq \mathbf{0}$. The normalized transmit power constraint at the BS is given by
\begin{equation}\label{budgetP}
\sum \limits_{k=1}^K \operatorname{tr} \mathbb{E}\{\mathbf{s}_k \mathbf{s}_k^H\} = \sum \limits_{k=1}^K \operatorname{tr}(\mathbf{Q}_k ) \leq 1.
\end{equation}
Without loss of generality, we assume the maximum excess delay is of $L_\text{max}$ taps in the time domain for the baseband equivalent channels of both the BS-RIS-user reflecting link and the BS-user direct link. The transmit signal $\mathbf{s}_k$ is first transformed into the time domain through a $K$-point inverse discrete Fourier transform (IDFT), followed by the appending of a cyclic prefix (CP) of length $N_\text{CP}$, with $N_\text{CP} \geq L_\text{max}$. At the user side, after removing the CP and performing the $K$-point discrete Fourier transform (DFT), the baseband received signal $\mathbf{y}_k\in \mathbb{C}^{N_u \times 1}$ at the $k$-th subcarrier can be presented as
\begin{equation}
\mathbf{y}_k=(\mathbf{T}_k\mathbf{\Theta}\mathbf{G}_k+ \mathbf{H}_k) \mathbf{s}_k + \mathbf{n}_k,
\end{equation}
where $\mathbf{T}_k\in \mathbb{C}^{N_u \times N_r}$, $\mathbf{G}_k\in \mathbb{C}^{N_r \times N_b}$, and $\mathbf{H}_k\in \mathbb{C}^{N_u \times N_b}$ denote the frequency-domain channel on subcarrier $k$ from the RIS to the user, from the BS to the RIS, and from the BS to the user, respectively; $\mathbf{\Theta} \in \mathbb{C}^{N_r\times N_r}$ denotes the response matrix at the RIS, which is written as
\begin{equation}
\mathbf{\Theta} \triangleq \operatorname{diag} \{\xi_1 e^{j\theta_1}, \xi_2 e^{j\theta_2}, \ldots, \xi_{N_r} e^{j\theta_{N_r}}\},
\end{equation}
where $\theta_i\in [0, 2\pi)$ and $\xi_i \in[0, 1]$ represent the phase shift and the amplitude of the reflection coefficient of the $i$-th reflection unit, respectively; and $\mathbf{n}_k \thicksim\mathcal{C}\mathcal{N}{(\mathbf{0}, \sigma^2 \mathbf{I}_{N_u})}$ represents the additive white Gaussian noise vector with zero mean and variance $\sigma^2$ \footnote{Without loss of generality, the variances of noise at different subcarriers are assumed to be the same.}. Note that the RIS can only bring the same phase shift on different subcarriers due to the lack of the ability of digital signal processing  \footnote{{In this paper, we assume that the RIS units are frequency independent as widely adopted in the existing literature \cite{9409636,9039554,8964457,8937491,9610122, 9417413, 9120639, nuti2021spectral, 8964330, hong2022hybrid, 9459505, 9685734, 9520295 }. Some literature, e.g., \cite{9389801}, studies the property of the frequency dependency of RIS. Nevertheless, there is still lack of a general model to accurately characterize the nonlinear response of RIS over various frequencies. } } \cite{9039554}. {In addition, the amplitude $\xi_i, \forall i \in \mathcal{N}_r$ is assumed to be one to maximize the signal reflection and simplify the hardware design of the RIS \cite{8811733,8937491,9409636}.} Let $\mathcal{K} = \{1,\ldots,K\}$ and $\mathcal{N}_r = \{ 1,\ldots,N_r\}$ denote the index sets of subcarriers and RIS elements, respectively.

\subsection{Channel Model} \label{channel}
Due to the limited scattering paths in mmWave channels, the SV model is widely adopted for modelling the wideband mmWave channel. Suppose uniform linear arrays (ULAs) are equipped at the BS and the user, and a uniform planar array (UPA) is equipped at the RIS. The frequency domain channel matrix $\mathbf{H}_k$, $\mathbf{G}_k$ and $\mathbf{T}_k$ at subcarrier $k$ can be expressed as \cite{8354789, ning2021prospective,7397861,8844787,8794743}
\begin{equation} \label{channelH}
\mathbf{H}_k= \sqrt{\frac{N_b N_{u}}{L_h}} \sum_{i=1}^{L_h} \alpha_{h,i} e^{-j2\pi \tau_{h,i} f_k} \mathbf{a}_r(\psi_{h,u,i},f_k) \mathbf{a}_t^H (\psi_{h,b,i},f_k),
\end{equation}
\begin{equation} \label{channelG}
\mathbf{G}_k= \sqrt{\frac{N_r N_{b}}{L_g}} \sum_{i=1}^{L_g} \alpha_{g,i} e^{-j2\pi \tau_{g,i} f_k} \mathbf{a}_r(\phi_{g,r,i}, \varphi_{g,r,i}, f_k) \mathbf{a}_t^H (\psi_{g,b,i},f_k),
\end{equation}
\begin{equation} \label{channelT}
\mathbf{T}_k= \sqrt{\frac{N_r N_{u}}{L_t}} \sum_{i=1}^{L_t} \alpha_{t,i} e^{-j2\pi \tau_{t,i} f_k} \mathbf{a}_r(\psi_{t,u,i},f_k) \mathbf{a}_t^H (\phi_{t,r,i},\varphi_{t,r,i},f_k),
\end{equation}
where $L_h$ $(L_g, L_t)$ is the number of the paths between the BS and the user (the BS and the RIS, the RIS and the user); $\alpha_{h,i} \sim \mathcal{CN} (0,\sigma^2_{h,i})$ $(\alpha_{g,i} \sim \mathcal{CN} (0,\sigma^2_{g,i}), \alpha_{t,i} \sim \mathcal{CN} (0,\sigma^2_{t,i})$ \footnote{In mmWave MIMO communication systems, the complex gain is widely assumed to be i.i.d. random variables following the complex Gaussian distribution \cite{7397861,6847111}.} denotes the complex gain of the $i$-th path and assume $\sigma^2_{h,1} \geq \sigma^2_{h,2} \geq \ldots \geq \alpha_{h,L_h}^2$ $(\sigma^2_{g,1} \geq \sigma^2_{g,2} \geq \ldots \geq \alpha_{g,L_g}^2$, $\sigma^2_{t,1} \geq \sigma^2_{t,2} \geq \ldots \geq \alpha_{t,L_t}^2)$; $\tau_{h,i}$ $(\tau_{g,i}, \tau_{t,i})$ is the delay of the $i$-th path; {$f_k$ is the frequency at subcarrier $k$, which is given by}
\begin{equation}
f_k = f_c+ \frac{f_s}{K}\left(k-1-\frac{K-1}{2}\right), \forall k \in \mathcal{K},
\end{equation}
where $f_c$ is the carrier frequency and $f_s$ is the bandwidth. Note that in the narrowband communication systems, it is assumed that $f_k =f_c$. However, this assumption is invalid in wideband communication systems and will dramatically affect the ergodic achievable rate. $\mathbf{a}_r(\cdot)$  $(\mathbf{a}_t(\cdot))$ denotes the normalized array response vector at the receiver side (the transmitter side); $\psi_{h,u,i}$ ($\psi_{t,u,i}$) denotes the angle of arrival (AoA) associated with the user of the direct link (the reflecting link); $\psi_{h,b,i}$ ($\psi_{g,b,i}$) denotes the angle of departure (AoD) associated with the BS of the direct link (the reflecting link); {$\phi_{g,r,i}$ ($\phi_{t,r,i}$) and $\varphi_{g,r,i}$ ($\varphi_{t,r,i}$) denote the azimuth and elevation angles of the arrival (departure) associated with the RIS.} For a ULA with $N$ antennas at the frequency $f_k$, the normalized array response vector is given by
\begin{equation}
\mathbf{a}(\psi, f_k) = \frac{1}{\sqrt{N}}\left[ 1, e^{j\frac{2\pi f_k d}{c} \sin \psi},  \cdots,  e^{j \frac{2\pi f_k d}{c} (N-1) \sin \psi} \right]^T,
\end{equation}
where $c$ represents the speed of light, $d=\frac{c}{2f_c} = \frac{\lambda}{2}$ is the antenna spacing, $\lambda$ represents the signal wavelength of the central frequency, and $\psi$ denotes the AoA or AoD. For a UPA with $M=M_y \times M_z$ elements at the frequency $f_k$, the normalized array response vector can be expressed as
 \begin{equation}
 	\begin{aligned} 	
        \mathbf{a} \left(\phi, \varphi, f_k\right)= &
        \frac{1}{\sqrt{M}} \left[1, \cdots, e^{j \frac{2 \pi f_k d_r}{c} \left(m_y \sin \phi \sin \varphi + m_z \cos \varphi\right)}\right. , \\
        &
        \left.\cdots, e^{j \frac{2 \pi f_k d_r}{c} \left((M_y-1) \sin \phi \sin \varphi)+(M_z-1) \cos \varphi\right)}\right]^{T},
    \end{aligned}
  \end{equation}
{where $d_r$ is the unit cell spacing which is assumed to be half wavelength,} $\phi$ and $\varphi$ denote the azimuth and elevation angles, respectively.
Defining
$ \mathbf{A}_{uh,k} = \left[ \mathbf{a}_r(\psi_{h,u,1},f_k), \ldots, \mathbf{a}_r(\psi_{h,u,L_h}, f_k) \right] \in\mathbb{C}^{N_u \times L_h}$,
$\mathbf{A}_{bh,k} = \left[ \mathbf{a}_t(\psi_{h,b,1}, f_k),  \mathbf{a}_t(\psi_{h,b,2}, f_k), \ldots, \mathbf{a}_t(\psi_{h,b,L_h}, f_k) \right] \in\mathbb{C}^{N_b \times L_h}$,
and $\mathbf{H}_{L,k} =\sqrt{\frac{N_u N_{b}}{L_h}} \text{diag}( \alpha_{h,1} e^{-j 2 \pi \tau_{h,1}f_k}, \linebreak[4] \alpha_{h,2} e^{-j 2\pi \tau_{h,2}f_k}, \ldots, \alpha_{h,L_h} e^{-j 2\pi\tau_{h,L_h} f_k})$, the channel matrix $\mathbf{H}_k$ in \eqref{channelH} can be rewritten as
\begin{equation} \label{channelHk}
  \mathbf{H}_k = \mathbf{A}_{uh,k} \mathbf{H}_{L,k} \mathbf{A}_{bh,k}^H.
\end{equation}
Defining
$ \mathbf{A}_{rg,k} = [ \mathbf{a}_r(\phi_{g,r,1}, \varphi_{g,r,1}, f_k),\linebreak[4] \mathbf{a}_r(\phi_{g,r,2}, \varphi_{g,r,2},f_k),  \ldots, \mathbf{a}_r(\phi_{g,r,L_g}, \varphi_{g,r,L_g}, f_k) ] \in\mathbb{C}^{N_r \times L_g}$,
$\mathbf{A}_{bg,k} = \left[ \mathbf{a}_t(\psi_{g,b,1},f_k), \ldots, \mathbf{a}_t(\psi_{g,b,L_g},f_k) \right] \in\mathbb{C}^{N_b \times L_g}$, and
$\mathbf{G}_{L,k} = \sqrt{\frac{N_r N_{b}}{L_g}} \text{diag}( \alpha_{g,1} e^{-j2\pi \tau_{g,1}f_k}, \alpha_{g,2} e^{-j2\pi \tau_{g,2}f_k}, \linebreak[4]\ldots, \alpha_{g,L_g} e^{-j2\pi \tau_{g,L_g} f_k})$, the channel matrix $\mathbf{G}_k$ in \eqref{channelG} can be expressed as
\begin{equation} \label{channelGk}
  \mathbf{G}_k = \mathbf{A}_{rg,k} \mathbf{G}_{L,k} \mathbf{A}_{bg,k}^H.
\end{equation}
Defining $\mathbf{A}_{rt,k} = [ \mathbf{a}_t(\phi_{t,r,1}, \varphi_{t,r,1},f_k),\linebreak[4]  \mathbf{a}_t(\phi_{t,r,2}, \varphi_{t,r,2}, f_k), \ldots, \mathbf{a}_t(\phi_{t,r,L_t}, \varphi_{t,r,L_t},f_k) ] \in\mathbb{C}^{N_r \times L_t}$,
$ \mathbf{A}_{ut,k} = [ \mathbf{a}_r(\psi_{t,u,1},f_k), \ldots, \mathbf{a}_r(\psi_{t,u,L_t},f_k) ] \in\mathbb{C}^{N_u \times L_t}$, and
$ \mathbf{T}_{L,k} = \sqrt{\frac{N_r N_{u}}{L_t}} \text{diag}( \alpha_{t,1} e^{-j2\pi \tau_{t,1} f_k}, \alpha_{t,2} e^{-j2\pi \tau_{t,2} f_k},\linebreak[4] \ldots, \alpha_{t,L_t} e^{-j 2\pi\tau_{t,L_t}f_k})$,
the channel matrix $\mathbf{T}_k$ in \eqref{channelT} can be presented as
\begin{equation} \label{channelTk}
\mathbf{T}_k=  \mathbf{A}_{ut,k} \mathbf{T}_{L,k} \mathbf{A}_{rt,k}^H.
\end{equation}

\subsection{Problem Formulation}
For a given transmit covariance matrix $\{\mathbf{Q}_k\}_{k=1}^K$ and a given RIS response matrix $\mathbf{\Theta}$, the ergodic achievable rate of the RIS-assisted mmWave MIMO-OFDM communication system is given by
%\begin{equation} \label{capacity_ori}
%R\left(\{\mathbf{Q}_k\}_{k=1}^K, \mathbf{\Theta}\right) = \mathbb{E}_{\mathbf{H}_\text{eff}} \left[ \frac{1}{K+N_\text{cp}}\sum \limits_{k=1}^K \log _{2} \operatorname{det}\left(\mathbf{I}_{N_{u}}+\frac{P_T}{\sigma^{2}} \mathbf{H}_{\text{eff},k} \mathbf{Q}_k \mathbf{H}_{\text{eff},k}^{H}\right)\right],
%\end{equation}
\begin{equation} \label{capacity_ori}
	\begin{aligned}
	&R\left(\{\mathbf{Q}_k\}_{k=1}^K, \mathbf{\Theta}\right) \\
	&= \mathbb{E}_{\mathbf{H}_\text{eff}} \left[ \frac{1}{K+N_\text{cp}}\sum \limits_{k=1}^K \log _{2} \operatorname{det}\left(\mathbf{I}_{N_{u}}+\frac{P_T}{\sigma^{2}} \mathbf{H}_{\text{eff},k} \mathbf{Q}_k \mathbf{H}_{\text{eff},k}^{H}\right)\right],
	\end{aligned}
\end{equation}
where $P_T>0$ is the power budget at the BS, and $\mathbf{H}_{\text{eff},k} = \mathbf{T}_k \mathbf{\Theta} \mathbf{G}_k + \mathbf{H}_k$ denotes the effective channel at subcarrier $k$ between the BS and the user. The expectation is taken over the effective channel $\mathbf{H}_{\text{eff}}$. More specifically, it is taken over the complex channel gains $\{\alpha_{h,i}\}_{i=1,2,\ldots,L_h}$ in Eq. \eqref{channelH}, $\{\alpha_{g,i}\}_{i=1,2,\ldots,L_g}$ in Eq. \eqref{channelG}, and $\{\alpha_{t,i}\}_{i=1,2,\ldots,L_t}$ in Eq. \eqref{channelT}.  In this paper, we aim to maximize the ergodic achievable rate by jointly designing the transmit covariance matrix $\{\mathbf{Q}_k\}_{k=1}^K$ at the BS and the response matrix $\mathbf{\Theta}$ at the RIS, subject to the maximum power budget at the BS. Therefore, the optimization problem can be formulated as
    \begin{subequations}\label{prob_original}
    \begin{align}
            \mathcal{P}_0: \max \limits_{ \{\mathbf{Q}_k\}_{k=1}^K, \mathbf{\Theta} } \quad & R\left(\{\mathbf{Q}_k\}_{k=1}^K, \mathbf{\Theta}\right) \\
            \text { s.t. } \quad\quad & \sum \limits_{k=1}^K \operatorname{tr}(\mathbf{Q}_k) \leq 1, \\
            & \mathbf{Q}_k \succeq 0, \forall k \in \mathcal{K}, \\
            &  \mathbf{\Theta} =\operatorname{diag}\left(e^{j \theta_{1}}, e^{j \theta_{2}}, \cdots, e^{j \theta_{N_r}} \right). \label{const_ris}
    \end{align}
    \end{subequations}
The problem $\mathcal{P}_0$ is highly challenging mainly due to the following three facts. First of all, there is no explicit expression of the ergodic achievable rate and the expectation operation in Eq. \eqref{capacity_ori} prevents further optimization. Secondly, the problem is non-convex due to the unit-modulus constraints of the RIS, and thus difficult to be optimally solved. Last but not the least, different from the narrowband communication systems, the RIS can only apply the same phase shifts over different subcarriers. Thereby, we need to consider all of the subcarriers when designing the RIS. In the following, we will firstly find an explicit expression to approximate the ergodic achievable rate in Section \ref{sec_app}, and then maximize the ergodic achievable rate by jointly designing $\{\mathbf{Q}_k\}_{k=1}^K$ and $\mathbf{\Theta}$ in Section \ref{sec_opt}.

%\vspace{-0.4cm}
\section{Approximations of the Ergodic Achievable Rate} \label{sec_app}
In this section, the approximation  of the ergodic achievable rate is firstly derived by means of the majorization theory. The Jensen's inequality is then used to obtain explicit and compact expressions. Finally, we evaluate the tightness of the approximations via numerical results.

\subsection{Approximation of the Ergodic Achievable Rate}
The derivation of the ergodic achievable rate is much more difficult than that in {our previous work \cite{li2022risassisted}}, where the direct link is blocked by obstacles. In the previous work,
\cite[Theorem 9.H.1.a]{marshall1979inequalities} is frequently adopted during the derivation. However, when the direct BS-user link exists, the cross product term $\mathbf{H}_k \mathbf{Q}_k \mathbf{G}_k^H \mathbf{\Theta}^H \mathbf{T}_k^H$ of the ergodic achievable rate is not a positive semi-definite Hermitian matrix, which prevents further derivation. Fortunately, we have the following proposition.
\begin{proposition} \label{prop_orthogo}
When $N_b$ goes to infinity, we have
\begin{equation}
  \mathbf{G}_k \mathbf{H}_k^H \rightarrow \mathbf{0}^{N_r \times N_u}.
\end{equation}
\end{proposition}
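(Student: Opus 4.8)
The plan is to expand the product $\mathbf{G}_k \mathbf{H}_k^H$ using the Saleh--Valenzuela factorizations in \eqref{channelGk} and \eqref{channelHk}, and then isolate the single factor that decays as $N_b \to \infty$. Writing $\mathbf{G}_k = \mathbf{A}_{rg,k} \mathbf{G}_{L,k} \mathbf{A}_{bg,k}^H$ and $\mathbf{H}_k = \mathbf{A}_{uh,k} \mathbf{H}_{L,k} \mathbf{A}_{bh,k}^H$, we get
\begin{equation}
  \mathbf{G}_k \mathbf{H}_k^H = \mathbf{A}_{rg,k} \mathbf{G}_{L,k} \left( \mathbf{A}_{bg,k}^H \mathbf{A}_{bh,k} \right) \mathbf{H}_{L,k}^H \mathbf{A}_{uh,k}^H .
\end{equation}
The only $N_b$-dependent object sandwiched in the middle is the $L_g \times L_h$ Gram matrix $\mathbf{A}_{bg,k}^H \mathbf{A}_{bh,k}$, whose $(m,n)$ entry is the inner product $\mathbf{a}_t^H(\psi_{g,b,m},f_k)\,\mathbf{a}_t(\psi_{h,b,n},f_k)$ of two normalized ULA steering vectors at the BS. So the whole claim reduces to showing that this Gram matrix tends to $\mathbf{0}$ as $N_b \to \infty$, after which the continuity of matrix multiplication and the fact that $\mathbf{A}_{rg,k}$, $\mathbf{G}_{L,k}$, $\mathbf{H}_{L,k}$, $\mathbf{A}_{uh,k}$ have fixed (or almost-surely finite) norms gives the result.

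Next I would prove the asymptotic orthogonality of distinct ULA steering vectors. For a ULA of $N$ elements, $\mathbf{a}^H(\psi_1,f_k)\,\mathbf{a}(\psi_2,f_k) = \frac{1}{N}\sum_{n=0}^{N-1} e^{j n \omega}$ with $\omega = \frac{2\pi f_k d}{c}(\sin\psi_2 - \sin\psi_1)$; this is a geometric series equal to $\frac{1}{N}\cdot\frac{e^{jN\omega}-1}{e^{j\omega}-1}$ whenever $\omega \not\equiv 0 \pmod{2\pi}$, so its modulus is bounded by $\frac{1}{N}\cdot\frac{2}{|e^{j\omega}-1|} \to 0$ as $N \to \infty$. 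The physical angles lie in $[-\pi/2,\pi/2]$ and $\frac{f_k d}{c} \approx \frac12$, so $\omega \equiv 0 \pmod{2\pi}$ forces $\sin\psi_1 = \sin\psi_2$, i.e. the two paths share the same BS AoD; under the standard assumption that the $L_g + L_h$ departure angles at the BS are distinct, every off-"diagonal" entry vanishes in the limit. Hence $\mathbf{A}_{bg,k}^H \mathbf{A}_{bh,k} \to \mathbf{0}^{L_g \times L_h}$ entrywise, and since the dimensions are fixed this is also convergence in any matrix norm.

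Finally I would assemble the pieces: $\|\mathbf{G}_k \mathbf{H}_k^H\| \le \|\mathbf{A}_{rg,k}\|\,\|\mathbf{G}_{L,k}\|\,\|\mathbf{A}_{bg,k}^H \mathbf{A}_{bh,k}\|\,\|\mathbf{H}_{L,k}\|\,\|\mathbf{A}_{uh,k}\|$, where the first, second, fourth and fifth factors are $O(1)$ (the steering-vector matrices have bounded columns, and the diagonal gain matrices have entries with almost-surely finite magnitude, the $\sqrt{N_r N_b/L_g}$ and $\sqrt{N_u N_b/L_h}$ prefactors being absorbed into the normalization of the rate expression or handled as fixed scalings in the asymptotic statement), while the middle factor $\to 0$. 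I expect the main subtlety to be a bookkeeping one rather than a deep one: making precise the sense in which the convergence holds once the $\sqrt{N_b}$ normalization factors inside $\mathbf{G}_{L,k}$ and $\mathbf{H}_{L,k}$ are accounted for — i.e. clarifying that "$\to \mathbf{0}$" is meant after the usual per-antenna normalization so that the decaying geometric-sum factor genuinely dominates — and in stating the non-degeneracy hypothesis on the AoDs (distinct angles, or at worst a measure-zero exceptional set) that rules out $\omega \equiv 0$. Everything else is a routine geometric-series estimate combined with submultiplicativity of the operator norm.
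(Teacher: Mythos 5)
Your proposal follows essentially the same route as the paper's proof in Appendix~A: factor $\mathbf{G}_k \mathbf{H}_k^H = \mathbf{A}_{rg,k} \mathbf{G}_{L,k} \left( \mathbf{A}_{bg,k}^H \mathbf{A}_{bh,k} \right) \mathbf{H}_{L,k}^H \mathbf{A}_{uh,k}^H$, invoke the almost-sure distinctness of the BS departure angles, and conclude from the asymptotic orthogonality of ULA steering vectors that the middle Gram matrix vanishes. The only differences are that you actually prove the geometric-series orthogonality estimate that the paper merely cites from its references, and that you explicitly flag the $\sqrt{N_b}$ prefactors hidden inside $\mathbf{G}_{L,k}$ and $\mathbf{H}_{L,k}$ — a scaling subtlety that the paper's own proof passes over in silence.
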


\begin{proof}
See Appendix \ref{append_pro1}.
\end{proof}

\textbf{Proposition} \ref{prop_orthogo} implies that the cross product term of the ergodic achievable rate can be neglected, when the transmit covariance matrix $\mathbf{Q}_k$ is an identity matrix and the number of the antennas at the BS is sufficiently large. Under  \textbf{Proposition} \ref{prop_orthogo} and the majorization theory\cite{marshall1979inequalities}, we have the following lemma.

\begin{lemma} \label{theorem_app}
Under the wideband SV channel model expressed in \eqref{channelHk}, \eqref{channelGk}, and \eqref{channelTk}, when the number of antennas at the BS goes to infinity, the ergodic achievable rate of the RIS-assisted mmWave MIMO-OFDM communication systems can be approximated by
\begin{equation}
{R}\left(\{\mathbf{Q}_k\}_{k=1}^K, \mathbf{\Theta}\right) \approx \widetilde{R}\left(\{\mathbf{Q}_k\}_{k=1}^K, \mathbf{\Theta}\right),
\end{equation}
\begin{figure*}[htb]
	\begin{equation} \label{ergo_app}
		\begin{aligned}
			\widetilde{R}\left(\{\mathbf{Q}_k\}_{k=1}^K, \mathbf{\Theta}\right)
			\triangleq \; & \mathbb{E}_{\alpha_{g}, \alpha_{t}} \left[ \frac{1}{K+N_{cp}} \sum \limits_{k=1}^K \sum \limits_{i=1}^{N_{s1}} \log_2 \left( 1+ \frac{P_T}{\sigma^2} \frac{N_b N_u N_r^2}{L_g L_t} q_{k,i} d_{ut,k,i} d_{bg,k,i} d_{r,k,i} |\alpha_{g,i}|^2 |\alpha_{t,i}|^2 \right) \right] \\
			& + \mathbb{E}_{\alpha_{h}} \left[ \frac{1}{K+N_{cp}} \sum \limits_{k=1}^K \sum \limits_{i=1}^{N_{s2}} \log_2 \left( 1+ \frac{P_T}{\sigma^2} \frac{N_b N_u}{L_h} q_{k,N_{s1}+i} d_{uh,k,i} d_{bh,k,i} |\alpha_{h,i}|^2 \right) \right],
		\end{aligned}
	\end{equation}
\end{figure*}
where $\widetilde{R}\left(\{\mathbf{Q}_k\}_{k=1}^K, \mathbf{\Theta}\right)$ is defined in Eq. \eqref{ergo_app}, shown at the top of the page, $N_{s1} = \min \Big( \operatorname{rank}\left(\mathbf{A}_{bg,k}^H \mathbf{A}_{bg,k}\right), \linebreak[4] \operatorname{rank}\left(\mathbf{A}_{ut,k}^H \mathbf{A}_{ut,k}\right), \operatorname{rank}\left(\mathbf{X}_{r,k}^H \mathbf{X}_{r,k}\right) \Big)$ denotes the rank of the BS-RIS-user reflecting link, $\mathbf{X}_{r,k} = \mathbf{A}_{rt,k}^H \mathbf{\Theta} \mathbf{A}_{rg,k}$; $ \small{ N_{s2} = \min \left( \operatorname{rank} \left(\mathbf{A}_{bh,k}^H \mathbf{A}_{bh,k}\right), \operatorname{rank} \left(\mathbf{A}_{uh,k}^H \mathbf{A}_{uh,k}\right) \right) }$ denotes the rank of the BS-user direct link, $N_s=N_{s1}+N_{s2}$;   $( d_{ut,k,1}, d_{ut,k,2}, \ldots, d_{ut,k,N_{s1}} )$, $( d_{bg,k,1}, d_{bg,k,2}, \ldots, d_{bg,k,N_{s1}} )$, $( d_{r,k,1}, d_{r,k,2}, \ldots, d_{r,k,N_{s1}} )$, $( d_{uh,k,1}, d_{uh,k,2}, \ldots, d_{uh,k,N_{s1}} )$, and  $( d_{bh,k,1}, d_{bh,k,2}, \ldots, d_{bh,k,N_{s1}} )$ are descending ordered eigenvalues of $\mathbf{A}_{ut,k}^H \mathbf{A}_{ut,k}$, $\mathbf{A}_{bg,k}^H \mathbf{A}_{bg,k}$, $\mathbf{X}_{r,k}^H \mathbf{X}_{r,k}$, $\mathbf{A}_{uh,k}^H \mathbf{A}_{uh,k}$ and $\mathbf{A}_{bh,k}^H \mathbf{A}_{bh,k}$, respectively; $q_{k,i}$ is the power allocated to the $i$-th data stream, whose value depends on the transmit covariance matrix $\mathbf{Q}_k$. We therefore explain the design of $\mathbf{Q}_k$ next.  Physically, when there are $L$ different paths, the BS can transmit at most $L$ data streams. Then, we can firstly align the transmit covariance matrix towards these paths and allocate different power over them later. Therefore, one reasonable and sub-optimal design of  $\mathbf{Q}_k$ is to determine its directions according to $\mathbf{A}_{bg,k}$ and $\mathbf{A}_{bh,k}$. Specifically, define the singular value decomposition (SVD) of $\mathbf{A}_{bg,k}$ and $\mathbf{A}_{bh,k}$ as
\begin{subequations}
\begin{align}
& \mathbf{A}_{bg,k} = \mathbf{U}_2 \boldsymbol{\Sigma}_2 \mathbf{V}_2^H,\\
& \mathbf{A}_{bh,k} = \mathbf{U}_3 \boldsymbol{\Sigma}_3 \mathbf{V}_3^3.
\end{align}
\end{subequations}
Let $\mathbf{U}_4 = \mathbf{U}_2(:,1:N_{s1}) \in \mathbb{C}^{N_b \times N_{s1}}$ denotes the first $N_{s1}$ columns of $\mathbf{U}_2$, $\mathbf{U}_5 = \mathbf{U}_3(:,1:N_{s2}) \in \mathbb{C}^{N_b \times N_{s2}}$ denotes the first $N_{s2}$ columns of $\mathbf{U}_3$, and $ \mathbf{U}_{q,k} = [\mathbf{U}_4, \mathbf{U}_5] \in \mathbb{C}^{N_b \times N_s}$. Then, the transmit covariance matrix $\mathbf{Q}_k$ is given by
\begin{equation} \label{approx_qk}
  \mathbf{Q}_k = \mathbf{U}_{q,k} \boldsymbol{\Lambda}_{q,k} \mathbf{U}_{q,k}^H,
\end{equation}
where $\boldsymbol{\Lambda}_{q,k} = \operatorname{diag}(q_{k,1}, q_{k,2}, \ldots, q_{k,N_s})$ and $q_{k,i}$ represents the amount of power allocated to the $i$-th data stream.
\end{lemma}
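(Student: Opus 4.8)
The plan is, for each subcarrier $k$, to evaluate the term $\log_2\det\big(\mathbf{I}_{N_u}+\frac{P_T}{\sigma^2}\mathbf{H}_{\text{eff},k}\mathbf{Q}_k\mathbf{H}_{\text{eff},k}^H\big)$ inside the expectation in \eqref{capacity_ori} after inserting the structured covariance $\mathbf{Q}_k$ of \eqref{approx_qk}, to show that as $N_b\to\infty$ it splits into a reflecting-link part controlled by $q_{k,1},\ldots,q_{k,N_{s1}}$ and a direct-link part controlled by $q_{k,N_{s1}+1},\ldots,q_{k,N_s}$, and then to reduce each part to product form by the majorization theorem, mirroring the derivation of \cite{li2022risassisted} for the pure reflecting-link case.

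First I would expand $\mathbf{H}_{\text{eff},k}\mathbf{Q}_k\mathbf{H}_{\text{eff},k}^H=\mathbf{R}_k+\mathbf{D}_k+\mathbf{C}_k+\mathbf{C}_k^H$ with $\mathbf{R}_k=\mathbf{T}_k\mathbf{\Theta}\mathbf{G}_k\mathbf{Q}_k\mathbf{G}_k^H\mathbf{\Theta}^H\mathbf{T}_k^H$, $\mathbf{D}_k=\mathbf{H}_k\mathbf{Q}_k\mathbf{H}_k^H$ and cross term $\mathbf{C}_k=\mathbf{T}_k\mathbf{\Theta}\mathbf{G}_k\mathbf{Q}_k\mathbf{H}_k^H$. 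Writing $\mathbf{Q}_k=\mathbf{U}_4\boldsymbol{\Lambda}_{q,k}^{(1)}\mathbf{U}_4^H+\mathbf{U}_5\boldsymbol{\Lambda}_{q,k}^{(2)}\mathbf{U}_5^H$ (the diagonal blocks $\boldsymbol{\Lambda}_{q,k}^{(1)}=\operatorname{diag}(q_{k,1},\ldots,q_{k,N_{s1}})$ and $\boldsymbol{\Lambda}_{q,k}^{(2)}=\operatorname{diag}(q_{k,N_{s1}+1},\ldots,q_{k,N_s})$ of $\boldsymbol{\Lambda}_{q,k}$) and substituting the factorizations \eqref{channelHk}--\eqref{channelTk}, the cross term $\mathbf{C}_k$ is a matrix product carrying the factor $\mathbf{A}_{bg,k}^H\mathbf{Q}_k\mathbf{A}_{bh,k}$, which by \textbf{Proposition}~\ref{prop_orthogo} — equivalently the asymptotic orthogonality of distinct ULA array response vectors at the BS, giving $\mathbf{U}_4^H\mathbf{A}_{bh,k}\to\mathbf{0}$ and $\mathbf{A}_{bg,k}^H\mathbf{U}_5\to\mathbf{0}$ at rate $O(1/N_b)$ — decays faster than the $O(\sqrt{N_b})$ growth of the path-gain matrices $\mathbf{G}_{L,k},\mathbf{H}_{L,k}$. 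A short order-of-magnitude check then yields $\|\mathbf{C}_k\|=O(1)$ while $\|\mathbf{R}_k\|,\|\mathbf{D}_k\|=O(N_b)$, so $\mathbf{C}_k$ may be discarded and $\log_2\det(\mathbf{I}_{N_u}+\frac{P_T}{\sigma^2}\mathbf{H}_{\text{eff},k}\mathbf{Q}_k\mathbf{H}_{\text{eff},k}^H)=\log_2\det(\mathbf{I}_{N_u}+\frac{P_T}{\sigma^2}(\mathbf{R}_k+\mathbf{D}_k))+o(1)$. I expect this to be the main obstacle: unlike the blockage case of \cite{li2022risassisted}, $\mathbf{C}_k$ is not positive semidefinite, so the majorization theorem cannot be applied until one has shown that $\mathbf{C}_k$ vanishes strictly faster than the desired $O(N_b)$ terms, which is exactly what \textbf{Proposition}~\ref{prop_orthogo} delivers.

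Next I would decouple $\mathbf{R}_k$ and $\mathbf{D}_k$. The same orthogonality gives $\mathbf{G}_k\mathbf{U}_5\to\mathbf{0}$ and $\mathbf{H}_k\mathbf{U}_4\to\mathbf{0}$, so that $\mathbf{R}_k$ asymptotically depends only on $\boldsymbol{\Lambda}_{q,k}^{(1)}$ and $\mathbf{D}_k$ only on $\boldsymbol{\Lambda}_{q,k}^{(2)}$; invoking in addition the asymptotic orthogonality of the receive (and RIS) steering subspaces of the two links, the ranges of $\mathbf{R}_k$ and $\mathbf{D}_k$ become orthogonal and $\log_2\det(\mathbf{I}+\frac{P_T}{\sigma^2}(\mathbf{R}_k+\mathbf{D}_k))=\log_2\det(\mathbf{I}+\frac{P_T}{\sigma^2}\mathbf{R}_k)+\log_2\det(\mathbf{I}+\frac{P_T}{\sigma^2}\mathbf{D}_k)+o(1)$; this is where the ``$\approx$'' of the statement is introduced.

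It then remains to put each determinant into product form. For the reflecting link the factorizations give $\mathbf{R}_k\to\mathbf{A}_{ut,k}\mathbf{T}_{L,k}\mathbf{X}_{r,k}\mathbf{G}_{L,k}\big(\mathbf{A}_{bg,k}^H\mathbf{U}_4\boldsymbol{\Lambda}_{q,k}^{(1)}\mathbf{U}_4^H\mathbf{A}_{bg,k}\big)\mathbf{G}_{L,k}^H\mathbf{X}_{r,k}^H\mathbf{T}_{L,k}^H\mathbf{A}_{ut,k}^H$; since $\mathbf{U}_4$ consists of the leading left singular vectors of $\mathbf{A}_{bg,k}$ one has $\mathbf{U}_4^H\mathbf{A}_{bg,k}\mathbf{A}_{bg,k}^H\mathbf{U}_4=\operatorname{diag}(d_{bg,k,1},\ldots,d_{bg,k,N_{s1}})$, so the innermost Gram factor has eigenvalues $\{q_{k,i}d_{bg,k,i}\}$, while the diagonal $\mathbf{T}_{L,k}$ and $\mathbf{G}_{L,k}$ carry the scalars $\frac{N_rN_u}{L_t}|\alpha_{t,i}|^2$ and $\frac{N_rN_b}{L_g}|\alpha_{g,i}|^2$. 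Peeling off the factors one at a time and applying \cite[Theorem 9.H.1.a]{marshall1979inequalities} at each step (each application majorizing the eigenvalues of a product by the ordered products of the constituents' eigenvalues, the bound tightening as the steering Gram matrices approach scaled identities) gives the nonzero eigenvalues of $\frac{P_T}{\sigma^2}\mathbf{R}_k$ as $\frac{P_T}{\sigma^2}\frac{N_bN_uN_r^2}{L_gL_t}q_{k,i}d_{ut,k,i}d_{bg,k,i}d_{r,k,i}|\alpha_{g,i}|^2|\alpha_{t,i}|^2$ for $i=1,\ldots,N_{s1}$, with $d_{ut,k,i}$ and $d_{r,k,i}$ the ordered eigenvalues of $\mathbf{A}_{ut,k}^H\mathbf{A}_{ut,k}$ and $\mathbf{X}_{r,k}^H\mathbf{X}_{r,k}$. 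The identical argument on $\mathbf{D}_k\to\mathbf{A}_{uh,k}\mathbf{H}_{L,k}\big(\mathbf{A}_{bh,k}^H\mathbf{U}_5\boldsymbol{\Lambda}_{q,k}^{(2)}\mathbf{U}_5^H\mathbf{A}_{bh,k}\big)\mathbf{H}_{L,k}^H\mathbf{A}_{uh,k}^H$ produces the second line of \eqref{ergo_app}. Summing $\log_2(1+\cdot)$ over these eigenvalues and over $k$, dividing by $K+N_{cp}$, and keeping the expectations over $\{\alpha_{g,i}\},\{\alpha_{t,i}\}$ and over $\{\alpha_{h,i}\}$ — which are removed by Jensen's inequality in the following step — yields exactly \eqref{ergo_app}. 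The concluding cascade of majorization inequalities is essentially that of \cite{li2022risassisted}; the new content, and the hard part, lies in the cross-term removal and the link decoupling of the first two steps.
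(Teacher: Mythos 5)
Your proposal is correct and reaches \eqref{ergo_app}, but it takes a noticeably different route from the paper's Appendix~B. The paper first pretends the instantaneous CSI of $\mathbf{H}_{\text{eff},k}$ is available, takes the capacity-optimal $\mathbf{Q}_k=\mathbf{V}_1\boldsymbol{\Lambda}_k\mathbf{V}_1^H$ from the SVD of the effective channel, reduces $R_k$ to $\sum_i\log_2\bigl(1+\tfrac{P_T}{\sigma^2}q_{k,i}\lambda_i(\mathbf{H}_{\text{eff},k}\mathbf{H}_{\text{eff},k}^H)\bigr)$, removes the cross term via \textbf{Proposition}~\ref{prop_orthogo}, splits the eigenvalues of $\mathbf{T}_k\mathbf{\Theta}\mathbf{G}_k\mathbf{G}_k^H\mathbf{\Theta}^H\mathbf{T}_k^H+\mathbf{H}_k\mathbf{H}_k^H$ into the two links, and then invokes \cite[Lemma 1]{li2022risassisted} and \cite[Theorem 1]{8816689} as black boxes to obtain the eigenvalue-product form; only at the very end is the statistical-CSI surrogate \eqref{approx_qk} declared a ``reasonable and sub-optimal'' stand-in, without re-deriving the formula for it. You instead insert the structured $\mathbf{Q}_k$ of \eqref{approx_qk} from the outset and track how the block structure $\mathbf{U}_{q,k}=[\mathbf{U}_4,\mathbf{U}_5]$ interacts with the steering matrices: the asymptotic orthogonality $\mathbf{U}_4^H\mathbf{A}_{bh,k}\to\mathbf{0}$ and $\mathbf{A}_{bg,k}^H\mathbf{U}_5\to\mathbf{0}$ (together with the $O(1)$-versus-$O(N_b)$ order count) kills the cross term and decouples the links at the transmit side, and the identity $\mathbf{U}_4^H\mathbf{A}_{bg,k}\mathbf{A}_{bg,k}^H\mathbf{U}_4=\operatorname{diag}(d_{bg,k,1},\ldots,d_{bg,k,N_{s1}})$ makes the index-by-index pairing $q_{k,i}d_{bg,k,i}$ in \eqref{ergo_app} explicit. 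What your route buys is that the approximation is established for the covariance matrix actually used in the subsequent optimization rather than for the unattainable instantaneous-CSI-optimal one, closing a logical step the paper glosses over; your repeated application of \cite[Theorem 9.H.1.a]{marshall1979inequalities} is precisely the content of the cited lemmas, so nothing is missing there, though you should state explicitly that pairing the $i$-th ordered eigenvalues of all factors is itself the majorization-based approximation rather than an equality. One caveat shared by both arguments: the decoupling $\log_2\det(\mathbf{I}+\mathbf{R}_k+\mathbf{D}_k)\approx\log_2\det(\mathbf{I}+\mathbf{R}_k)+\log_2\det(\mathbf{I}+\mathbf{D}_k)$ additionally requires the receive-side subspaces spanned by $\mathbf{A}_{ut,k}$ and $\mathbf{A}_{uh,k}$ to be nearly orthogonal, which does not follow from $N_b\to\infty$ alone; you flag this honestly as part of the ``$\approx$'', whereas the paper attributes it only to the orthogonality of $\mathbf{G}_k$ and $\mathbf{H}_k^H$.
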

\begin{proof}
See Appendix \ref{append_theorem1}.
\end{proof}

Note that the approximation of the ergodic achievable rate  consists of two parts, i.e., the BS-RIS-user reflection part and the BS-user direct part. The reflection part depends on the channel gains $\{\alpha_g\}$ and $\{\alpha_t\}$ of the first $N_{s1}$ paths in both the BS-RIS link and the RIS-user link, while the direct part depends on the channel gains $\{\alpha_h\}$ of the first $N_{s2}$ paths.

\subsection{Jensen's Approximation of the Ergodic Achievable Rate}
The expectation operation in \eqref{ergo_app} hinders further optimization. To overcome this difficulty, Jensen's inequality is adopted to obtain simplified expressions.
\begin{theorem}
Under the wideband SV channel model expressed in \eqref{channelHk}, \eqref{channelGk}, and \eqref{channelTk}, when the number of antennas at the BS goes to infinity, the ergodic achievable rate of the RIS-assisted mmWave MIMO-OFDM communication systems can be upper bounded by
\begin{equation}
 \widetilde{R}\left(\{\mathbf{Q}_k\}_{k=1}^K, \mathbf{\Theta}\right) \leq R_{Jen}\left(\{\mathbf{Q}_k\}_{k=1}^K, \mathbf{\Theta}\right),
\end{equation}
where $R_{Jen}\left(\{\mathbf{Q}_k\}_{k=1}^K, \mathbf{\Theta}\right)$ is defined in Eq. \eqref{jen}, shown at the top of the next page.
\begin{figure*}[htb]
\begin{equation} \label{jen}
\begin{aligned}
R_{Jen}\left(\{\mathbf{Q}_k\}_{k=1}^K, \mathbf{\Theta}\right) &\triangleq  \frac{1}{K+N_{cp}} \sum \limits_{k=1}^K \sum \limits_{i=1}^{N_{s1}} \log_2 \left( 1+ \frac{P_T}{\sigma^2} \frac{N_b N_u N_r^2 \sigma_{g,i}^2 \sigma_{t,i}^2}{L_g L_t} q_{k,i} d_{ut,k,i} d_{bg,k,i} d_{r,k,i} \right) \\
 & \quad + \frac{1}{K+N_{cp}} \sum \limits_{k=1}^K \sum \limits_{i=1}^{N_{s2}} \log_2 \left( 1+ \frac{P_T}{\sigma^2} \frac{N_b N_u \sigma_{h,i}^2}{L_h} q_{k,N_{s1}+i} d_{uh,k,i} d_{bh,k,i} \right).
\end{aligned}
\end{equation}
\end{figure*}
\end{theorem}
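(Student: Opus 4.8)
The plan is to apply Jensen's inequality termwise to the expression for $\widetilde{R}$ in Eq.~\eqref{ergo_app}. First I would use linearity of the expectation to interchange $\mathbb{E}_{\alpha_g,\alpha_t}$ (resp.\ $\mathbb{E}_{\alpha_h}$) with the finite double sum over $k$ and $i$, so that it suffices to bound each summand separately. In the reflection part the $(k,i)$-th summand has the form $\mathbb{E}_{\alpha_{g,i},\alpha_{t,i}}\!\left[\log_2(1+c_{k,i}X_{k,i})\right]$ with deterministic coefficient $c_{k,i}=\frac{P_T}{\sigma^2}\frac{N_bN_uN_r^2}{L_gL_t}q_{k,i}d_{ut,k,i}d_{bg,k,i}d_{r,k,i}\geq 0$ and nonnegative random part $X_{k,i}=|\alpha_{g,i}|^2|\alpha_{t,i}|^2$; the direct part is analogous with $X_{k,i}=|\alpha_{h,i}|^2$. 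Here $q_{k,i}$, the eigenvalues $d_{ut,k,i},d_{bg,k,i},d_{uh,k,i},d_{bh,k,i}$ of the steering-matrix Gram matrices, and $d_{r,k,i}$ (an eigenvalue of $\mathbf{X}_{r,k}^H\mathbf{X}_{r,k}$ with $\mathbf{X}_{r,k}=\mathbf{A}_{rt,k}^H\mathbf{\Theta}\mathbf{A}_{rg,k}$) are all functions only of the fixed angles and of the design variables $\mathbf{Q}_k,\mathbf{\Theta}$, hence deterministic; the only randomness lies in the path gains.

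Next I would invoke concavity: for each fixed $c\geq 0$ the map $x\mapsto\log_2(1+cx)$ is concave on $[0,\infty)$, so Jensen's inequality gives $\mathbb{E}[\log_2(1+cX)]\leq\log_2(1+c\,\mathbb{E}[X])$ for any nonnegative random variable $X$. Applying this to every summand turns each expectation of a logarithm into the logarithm of an expectation, leaving only the first and second moments of the path gains to be evaluated.

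It then remains to compute these moments. Since $\alpha_{g,i}\sim\mathcal{CN}(0,\sigma_{g,i}^2)$ and $\alpha_{t,i}\sim\mathcal{CN}(0,\sigma_{t,i}^2)$ come from distinct links and are independent, $\mathbb{E}[|\alpha_{g,i}|^2|\alpha_{t,i}|^2]=\mathbb{E}[|\alpha_{g,i}|^2]\,\mathbb{E}[|\alpha_{t,i}|^2]=\sigma_{g,i}^2\sigma_{t,i}^2$, and likewise $\mathbb{E}[|\alpha_{h,i}|^2]=\sigma_{h,i}^2$. Substituting these into the Jensen bounds for the reflection and direct parts and re-collecting the double sums yields exactly $R_{Jen}(\{\mathbf{Q}_k\}_{k=1}^K,\mathbf{\Theta})$ as defined in Eq.~\eqref{jen}, which completes the argument.

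As for difficulty, this step is essentially bookkeeping: the only points requiring care are (i) verifying that every coefficient multiplying the random part is nonnegative, so that concavity—and hence the direction of the Jensen bound—is valid, and (ii) using the cross-link independence of the gains to factor the mixed second moment in the reflection term. Neither is a genuine obstacle, so I expect the substantive work to reside in the preceding Lemma~\ref{theorem_app} (the majorization and asymptotic-orthogonality argument) rather than here.
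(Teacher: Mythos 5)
Your proposal is correct and follows essentially the same route as the paper's proof: apply Jensen's inequality $\mathbb{E}[\log_2(1+cX)]\leq\log_2(1+c\,\mathbb{E}[X])$ termwise to Eq.~\eqref{ergo_app}, then evaluate $\mathbb{E}\left[|\alpha_{g,i}|^2|\alpha_{t,i}|^2\right]=\sigma_{g,i}^2\sigma_{t,i}^2$ and $\mathbb{E}\left[|\alpha_{h,i}|^2\right]=\sigma_{h,i}^2$ using the independence of the cross-link gains. Your added remarks on verifying nonnegativity of the deterministic coefficients and on concavity justifying the direction of the bound are sound and consistent with the paper's argument.
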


\begin{proof}
See Appendix \ref{append_theorem3}.
\end{proof}

The Jensen's approximation  has a very compact form composed of the product of the eigenvalues and the summation over data streams. {Note that the above derivation only relies on the asymptotic orthogonality and eigenvalues of the steering matrices, thus it holds for both ULA and UPA structures.}  The Jensen's approximation is composed of the BS-RIS-user reflection part and the BS-user direct part. The reflection part is determined by signal-to-noise ratio (SNR), the number of antennas of the BS and the user, the number of the reflection units of the RIS, the power allocation at the BS, and the eigenvalues of the matrices of $\mathbf{A}_{ut,k}^H \mathbf{A}_{ut,k}$, $\mathbf{A}_{bh,k}^H \mathbf{A}_{bh,k}$ and $\mathbf{X}_{r,k}^H \mathbf{X}_{r,k}$. The direct part is influenced by SNR, the number of antennas of the BS and the user, the power allocation at the BS, and the eigenvalues of the matrices of $\mathbf{A}_{uh,k}^H \mathbf{A}_{uh,k}$ and $\mathbf{A}_{bh,k}^H \mathbf{A}_{bh,k}$.

In addition, when no CSI is available at the BS, uniform power allocation over transmit signals, i.e., $\mathbf{Q}_k = \frac{1}{KN_b} \mathbf{I}_{N_b}$, is optimal \cite{1611096}. In this case, we have the following lemma.
\begin{lemma}
Under the wideband SV channel model expressed in \eqref{channelHk}, \eqref{channelGk} and \eqref{channelTk}, when the number of antennas at the BS goes to infinity and $\mathbf{Q}_k =  \frac{1}{KN_b} \mathbf{I}_{N_b}$, the Jensen's approximation can be expressed as
%\begin{equation} \label{Qid}
%\begin{aligned}
%&R_{Jen} \left(\mathbf{\Theta} \bigg| \mathbf{Q}_k= \frac{1}{KN_b} \mathbf{I}_{N_b} \right) \\
%& = \frac{1}{K+N_{cp}} \sum \limits_{k=1}^K \left[\sum \limits_{i=1}^{N_{s1}} \log_2 \left( 1+ \frac{P_T}{\sigma^2} \frac{ N_u N_r^2 \sigma_{g,i}^2 \sigma_{t,i}^2}{K L_g L_t} d_{ut,k,i} d_{bg,k,i} d_{r,k,i} \right) \right. \\
%& \quad+ \left. \sum \limits_{i=1}^{N_{s2}} \log_2 \left( 1+ \frac{P_T}{\sigma^2} \frac{N_u \sigma_{h,i}^2}{K L_h} d_{uh,k,i} d_{bh,k,i} \right) \right]. \\
%\end{aligned}
%\end{equation}
\begin{equation} \label{Qid}
	\begin{aligned}
		&R_{Jen} \left(\mathbf{\Theta} \bigg| \mathbf{Q}_k= \frac{1}{KN_b} \mathbf{I}_{N_b} \right) \\
		& = \frac{1}{K+N_{cp}} \sum \limits_{k=1}^K \left[ \sum \limits_{i=1}^{N_{s2}} \log_2 \left( 1+ \frac{P_T}{\sigma^2} \frac{N_u \sigma_{h,i}^2}{K L_h} d_{uh,k,i} d_{bh,k,i} \right)\right. \\
		& \quad+ \left. \sum \limits_{i=1}^{N_{s1}} \log_2 \left( 1+ \frac{P_T}{\sigma^2} \frac{ N_u N_r^2 \sigma_{g,i}^2 \sigma_{t,i}^2}{K L_g L_t} d_{ut,k,i} d_{bg,k,i} d_{r,k,i} \right) \right]. \\
	\end{aligned}
\end{equation}
\end{lemma}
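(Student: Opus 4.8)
The plan is to specialize the general Jensen's bound \eqref{jen} to the choice $\mathbf{Q}_k = \frac{1}{KN_b}\mathbf{I}_{N_b}$ by reading off the per-stream powers $q_{k,i}$ it induces. Recall from Lemma~\ref{theorem_app} that the surrogate transmit covariance has the structured form $\mathbf{Q}_k = \mathbf{U}_{q,k}\boldsymbol{\Lambda}_{q,k}\mathbf{U}_{q,k}^H$ with $\boldsymbol{\Lambda}_{q,k} = \operatorname{diag}(q_{k,1},\ldots,q_{k,N_s})$ and $\mathbf{U}_{q,k} = [\mathbf{U}_4,\mathbf{U}_5]$, where $\mathbf{U}_4$ and $\mathbf{U}_5$ collect the leading left singular vectors of $\mathbf{A}_{bg,k}$ and $\mathbf{A}_{bh,k}$, so that each column $\mathbf{u}_{q,k,i}$ of $\mathbf{U}_{q,k}$ has unit norm; moreover, by the asymptotic orthogonality of the transmit array response vectors associated with distinct AoDs (the same property underlying Proposition~\ref{prop_orthogo}), $\mathbf{U}_4^H\mathbf{U}_5\to\mathbf{0}$ as $N_b\to\infty$, which is what makes the two summations in \eqref{jen} correspond to non-overlapping data streams in the first place.

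First I would identify $q_{k,i}$. Since $q_{k,i}$ is the power carried by the $i$-th direction, i.e. $q_{k,i} = \mathbf{u}_{q,k,i}^H\mathbf{Q}_k\mathbf{u}_{q,k,i}$, substituting $\mathbf{Q}_k = \frac{1}{KN_b}\mathbf{I}_{N_b}$ and using $\|\mathbf{u}_{q,k,i}\|_2 = 1$ gives $q_{k,i} = \frac{1}{KN_b}$ for every $i\in\{1,\ldots,N_s\}$ and $k\in\mathcal{K}$. This choice is feasible for $\mathcal{P}_0$ because $\sum_{k=1}^K\operatorname{tr}(\mathbf{Q}_k) = \sum_{k=1}^K\frac{N_b}{KN_b} = 1$, and it is moreover the optimal allocation when no CSI is available at the BS \cite{1611096}. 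Then I would substitute $q_{k,i} = \frac{1}{KN_b}$ into both sums of \eqref{jen}: in the reflection term the factor $N_b q_{k,i} = \frac{1}{K}$, so $\frac{P_T}{\sigma^2}\frac{N_bN_uN_r^2\sigma_{g,i}^2\sigma_{t,i}^2}{L_gL_t}q_{k,i}$ collapses to $\frac{P_T}{\sigma^2}\frac{N_uN_r^2\sigma_{g,i}^2\sigma_{t,i}^2}{KL_gL_t}$; likewise in the direct term $\frac{P_T}{\sigma^2}\frac{N_bN_u\sigma_{h,i}^2}{L_h}q_{k,N_{s1}+i}$ collapses to $\frac{P_T}{\sigma^2}\frac{N_u\sigma_{h,i}^2}{KL_h}$. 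Collecting both sums under the common prefactor $\frac{1}{K+N_{cp}}\sum_{k=1}^K$ then yields exactly \eqref{Qid}.

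There is essentially no hard step here: the whole content is the bookkeeping fact that a scaled-identity covariance places equal power on every unit-norm direction, so the $N_b$ in the numerators of \eqref{jen} cancels against the $N_b$ in the denominator of $q_{k,i} = 1/(KN_b)$. The only point that deserves a sentence of care is that $\mathbf{U}_{q,k}$ must genuinely describe $N_s = N_{s1}+N_{s2}$ distinct streams; for $\mathbf{Q}_k\propto\mathbf{I}_{N_b}$ the value $q_{k,i} = 1/(KN_b)$ holds exactly for each column regardless of cross-orthogonality, and the asymptotic regime $N_b\to\infty$ is invoked only to guarantee $\mathbf{U}_4$ and $\mathbf{U}_5$ are jointly orthonormal, which is already assumed throughout Section~\ref{sec_app}.
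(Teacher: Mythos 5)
Your conclusion is correct and the final arithmetic (the $N_b$ in the channel-gain numerator cancelling against the $1/(KN_b)$ uniform power) matches the paper's, but your route differs from the paper's proof. The paper does not substitute into \eqref{jen}; it goes back to the exact rate, writes $\mathbf{H}_{\text{eff},k}\,\tfrac{1}{KN_b}\mathbf{I}_{N_b}\,\mathbf{H}_{\text{eff},k}^H=\tfrac{1}{KN_b}\mathbf{H}_{\text{eff},k}\mathbf{H}_{\text{eff},k}^H$, so that $R_k=\mathbb{E}\big[\sum_i\log_2\big(1+\tfrac{P_T}{\sigma^2KN_b}\lambda_i(\mathbf{H}_{\text{eff},k}\mathbf{H}_{\text{eff},k}^H)\big)\big]$, and then re-runs the majorization and Jensen steps of Appendices B and C with the scalar $\tfrac{1}{KN_b}$ in place of the per-stream powers. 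Your shortcut of reading off $q_{k,i}=\mathbf{u}_{q,k,i}^H\mathbf{Q}_k\mathbf{u}_{q,k,i}=\tfrac{1}{KN_b}$ and plugging into \eqref{jen} has one wrinkle you should acknowledge explicitly: \eqref{jen} was derived only for $\mathbf{Q}_k$ of the structured form \eqref{approx_qk}, which is rank $N_s$, whereas $\tfrac{1}{KN_b}\mathbf{I}_{N_b}$ is full rank and is \emph{not} of that form (the paper itself flags these as two separate admissible cases). To make the substitution legitimate you need the additional observation that the effective channel annihilates the orthogonal complement of $\operatorname{span}(\mathbf{U}_{q,k})$ — i.e.\ $\mathbf{H}_{\text{eff},k}\mathbf{Q}_k\mathbf{H}_{\text{eff},k}^H=\mathbf{H}_{\text{eff},k}\big(\mathbf{U}_{q,k}\tfrac{1}{KN_b}\mathbf{I}_{N_s}\mathbf{U}_{q,k}^H\big)\mathbf{H}_{\text{eff},k}^H$ asymptotically, because the row spaces of $\mathbf{G}_k$ and $\mathbf{H}_k$ lie in the spans of $\mathbf{A}_{bg,k}$ and $\mathbf{A}_{bh,k}$ — which is stronger than the joint orthonormality of $\mathbf{U}_4$ and $\mathbf{U}_5$ that you invoke. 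With that sentence added, your argument is a valid and arguably more economical derivation; the paper's re-derivation from the exact rate avoids the issue entirely by never passing through the structured form.
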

\begin{proof}
See Appendix \ref{append_pro2}.
\end{proof}

It is worth noting that the ergodic achievable rate approximations are derived for an arbitrary positive semidefinite Hermitian matrix $\mathbf{Q}$ {in our previous work \cite{li2022risassisted}}. However, in this work, the ergodic achievable rate approximations can be derived only when $\mathbf{Q}_k$ satisfies \eqref{approx_qk} or $\mathbf{Q}_k =  \frac{1}{KN_b} \mathbf{I}_{N_b}$ due to the existence of the BS-user  direct link.

\subsection{Tightness of the Approximations} \label{sec_tightness}
\begin{figure}[t]
\centering
\begin{minipage}[t]{0.4\textwidth}
\centering
\includegraphics[width=8cm]{./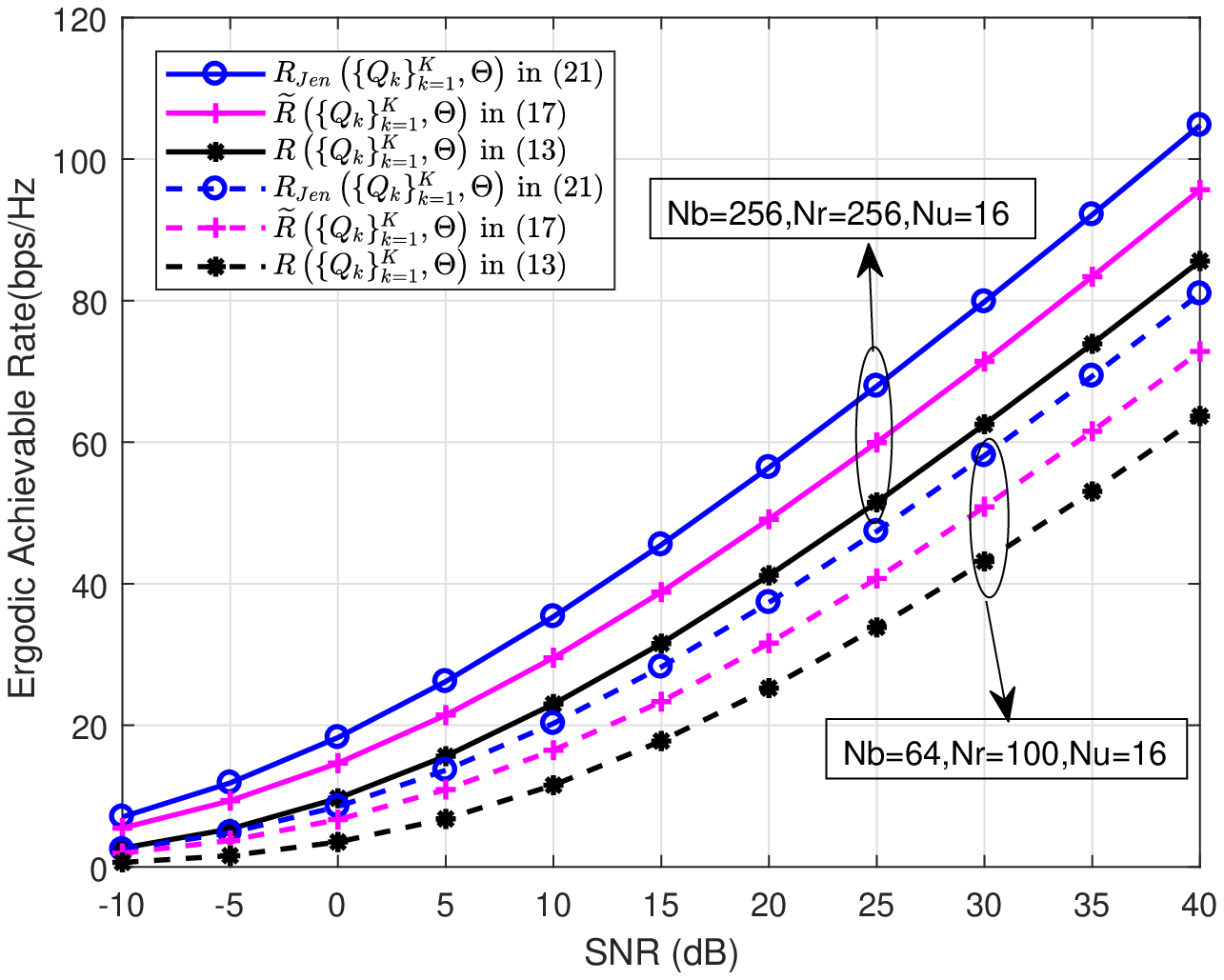}
\caption{{Ergodic achievable rate against SNR with different antenna numbers when $L_g=L_t=L_h=6$.}} \label{cap_num}
\end{minipage}
\begin{minipage}[t]{0.4\textwidth}
\centering
\includegraphics[width=8cm]{./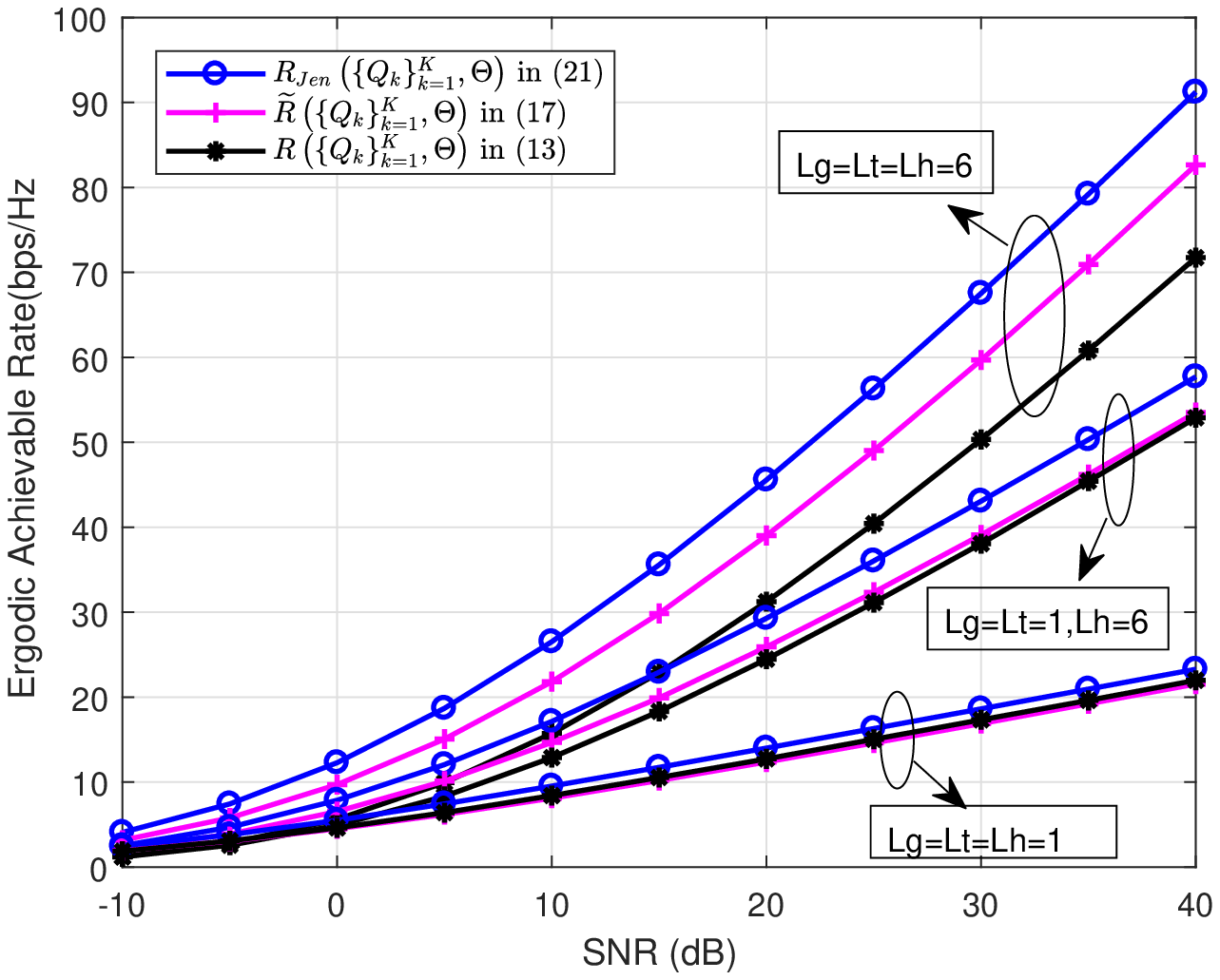}
\caption{{Ergodic achievable rate against SNR with different channel path when $N_b=100$, $N_r=169$ and $N_u=16$.}} \label{cap_path}
\end{minipage}
\end{figure}

\begin{figure}[t]
\begin{centering}
%\vspace{-0.4cm}
\includegraphics[width=.4\textwidth]{./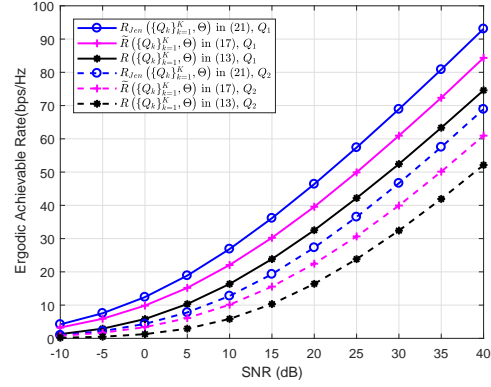}
 \caption{{Ergodic achievable rate against SNR with different $\mathbf{Q}_k$ when $L_g=L_t=L_h=6$, $N_b=100$, $N_r=169$, and $N_u=16$.}}\label{cap_q}
\end{centering}
%\vspace{-0.8cm}
\end{figure}

\begin{table*}[!t]
\centering
\caption{Normalized error of
$R_{Jen}\left(\{\mathbf{Q}_k\}_{k=1}^K, \mathbf{\Theta}\right)$ and $\widetilde{R}\left(\{\mathbf{Q}_k\}_{k=1}^K, \mathbf{\Theta}\right)$ with respect to ${R}(\{\mathbf{Q}_k\}_{k=1}^K, \mathbf{\Theta})$ when $\text{SNR}=20$ dB.} \label{compu_comp}
\begin{tabular}{|c|c|c|c|c|c|c|c|}
\hline  \multirow{2}{*}{} & \multicolumn{2}{|c|} {\text { Fig. 2 }} & \multicolumn{3}{|c|} {\text { Fig. 3 }} & \multicolumn{2}{|c|} {\text { Fig. 4 }} \\
\cline{2-8}&  $N_b=256,$  & $N_b=64,$ & $L_g=6,$ & $L_g=1,$ & $L_g=1,$& &\\
& $N_r=16\times 16,$  & $N_r=10\times 10,$ & $L_t=6,$ & $L_t=1,$ & $L_t=1,$ &$\mathbf{Q}_1$ & $\mathbf{Q}_2$ \\
& $N_u=16$  & $N_u=16$ & $L_h=6$ & $L_h=6$ & $L_h=1$ & &\\
\hline  {$ \frac{\left| R_{Jen}\left(\{\mathbf{Q}_k\}_{k=1}^K, \mathbf{\Theta}\right) - {R}\left(\{\mathbf{Q}_k\}_{k=1}^K, \mathbf{\Theta}\right)\right| } {{R}\left(\{\mathbf{Q}_k\}_{k=1}^K, \mathbf{\Theta}\right)}  $} &  37.0\%  & 47.9\% & 45.9\%& 19.7\% &  10.0\% & 42.6\% & 66.5\% \\
\hline $\frac{\left|\widetilde{R}\left(\{\mathbf{Q}_k\}_{k=1}^K, \mathbf{\Theta}\right) - {R}\left(\{\mathbf{Q}_k\}_{k=1}^K, \mathbf{\Theta}\right)\right|} {{R}\left(\{\mathbf{Q}_k\}_{k=1}^K, \mathbf{\Theta}\right)}$ &  19.3\%  & 25.2\% & 24.9\% &  5.9\% &  2.9\% &  21.7\% & 36.9\%\\
\hline
\end{tabular}
\end{table*}

{In this subsection, we investigate the tightness of the approximations via numerical results. Let us first consider the details of the channels as illustrated in Section \ref{channel}. The central carrier frequency is $f_c=28$ GHz, the number of subcarriers is $K=24$, the length of CP is $N_{CP}=10$, and the bandwidth is $f_s = 1$ GHz. The angles $\psi_{h,b,i}$, $\psi_{g,b,i}$, $\psi_{h,u,i}$ and $\psi_{t,u,i}$ are randomly generated and uniformly distributed in $[-\pi, \pi)$, $\phi_{g,r,i}$ and $\phi_{t,r,i}$ in $[-\pi/2,\pi/2]$, and $\varphi_{g,r,i}$ and $\varphi_{t,r,i}$ in $[0,\pi]$. The exponential distribution is adopted to model the variances of the complex gain \cite{7501500}. In order to make the complex gain of the direct link and the reflecting link comparable, we set $\sigma_{h,i}^2 \sim \exp (1)$, $\sigma_{g,i}^2 \sim \exp (0.1)$ and $\sigma_{t,i}^2 \sim \exp (0.1)$. All curves are averaged over 100 sets of independent realizations of the angles. For each set of angle realization, 1000 independent realizations of the path gains are generated for the Monte-Carlo simulations. Here, the phase matrix of RIS is randomly generated. We investigate the ergodic achievable rate against SNR with different numbers of antennas or reflection units (where $\mathbf{Q}_k$ is generated according to Eq. \eqref{approx_qk}, $\mathbf{\Lambda}_{q,k}= \frac{1}{K N_s} \mathbf{I}_{N_s}$, and $L_g=L_t=L_h=6$), different number of paths (where $\mathbf{Q}_k$ is generated according to Eq. \eqref{approx_qk}, $\mathbf{\Lambda}_{q,k}= \frac{1}{K N_s} \mathbf{I}_{N_s}$, $N_b=100$, $N_r=169$, and $N_u=16$), and different transmit covariance matrices (where $L_g=L_t=L_h=6$, $N_b=100$, $N_r=169$, $N_u=16$, $\mathbf{Q}_1$ is generated according to Eq. \eqref{approx_qk}, $\mathbf{\Lambda}_{q,k}= \frac{1}{K N_s} \mathbf{I}_{N_s}$, and $\mathbf{Q}_2=\frac{1}{KN_b} \mathbf{I}_{N_b}$) in Fig. \ref{cap_num}, \ref{cap_path}, and \ref{cap_q}, respectively. In addition, we  compare the normalized error of $R_{Jen}\left(\{\mathbf{Q}_k\}_{k=1}^K, \mathbf{\Theta}\right)$ and $\widetilde{R}\left(\{\mathbf{Q}_k\}_{k=1}^K, \mathbf{\Theta}\right)$ with respect to ${R}\left(\{\mathbf{Q}_k\}_{k=1}^K, \mathbf{\Theta}\right)$ when $\text{SNR}=20$ dB in Table \ref{compu_comp}. From Figs. \ref{cap_num}-\ref{cap_q} and Table \ref{compu_comp}, we find that the approximations $R_{Jen}\left(\{\mathbf{Q}_k\}_{k=1}^K, \mathbf{\Theta}\right)$ and $\widetilde{R}\left(\{\mathbf{Q}_k\}_{k=1}^K, \mathbf{\Theta}\right)$ are close to ${R}\left(\{\mathbf{Q}_k\}_{k=1}^K, \mathbf{\Theta}\right)$ when the number of paths is one, i.e., $L_g=L_t=L_h=1$. The approximations are larger than the actual results in other cases. It is mainly due to the following three facts. First of all, \textbf{Proposition} \ref{prop_orthogo} utilizes the asymptotic orthogonality of array response vectors. However, the number of antennas at the BS is usually limited in practice, which leads to  $\mathbf{G}_k \mathbf{H}_k^H \neq \mathbf{0}^{N_r\times N_u}$. Second, the optimal $\mathbf{Q}_k$ is closely related to the end-to-end instantaneous channel $\mathbf{H}_{\text{eff},k}$, but only the statistical CSI is available. Thus, the adopted transmit covariance matrix $\mathbf{Q}_k$ in Eq. \eqref{approx_qk} is sub-optimal, resulting in some errors. Third, $R_{Jen}\left(\{\mathbf{Q}_k\}_{k=1}^K, \mathbf{\Theta}\right)$ is amplified by the Jensen's inequality. Nevertheless, the trend of the derived approximations is consistent with the actual ergodic achievable rate, and it will be shown later that our proposed algorithm outperforms the algorithm in \cite{9234098}.}

\section{Optimization of Transmit Covariance Matrix and Reflection Coefficients} \label{sec_opt}
With the obtained  upper bound $R_{Jen}\left(\{\mathbf{Q}_k\}_{k=1}^K, \mathbf{\Theta}\right)$ of the ergodic achievable rate of the RIS-assisted mmWave MIMO-OFDM communication systems, we now concentrate on maximizing the ergodic achievable rate by jointly designing the transmit covariance matrix $\{\mathbf{Q}_k\}_{k=1}^K$ and the reflection coefficients $\mathbf{\Theta}$. Note that $\mathbf{Q}_k$ is determined by Eq. \eqref{approx_qk}. Thus, we only need to optimize $q_{k,i}$, instead of $\mathbf{Q}_k$. Applying $R_{Jen}\left(\{\mathbf{Q}_k\}_{k=1}^K, \mathbf{\Theta}\right)$ in Eq. \eqref{jen} to the problem $\mathcal{P}_0$, we have
\begin{subequations}\label{prob_p1}
\begin{align}
\mathcal{P}_1: \max \limits_{ q_{k,i}, \mathbf{\Theta} } \quad & R_{Jen}\left(\{\mathbf{Q}_k\}_{k=1}^K, \mathbf{\Theta}\right) \\
\text { s.t. } \quad & \sum \limits_{k=1}^K \sum \limits_{i=1}^{N_s}q_{k,i} \leq 1, \\
& q_{k,i} \geq 0, \forall i, k, \\
&  \mathbf{\Theta} =\operatorname{diag}\left(e^{j \theta_{1}}, e^{j \theta_{2}}, \cdots, e^{j \theta_{N_r}} \right),
\end{align}
\end{subequations}
where
%\begin{equation}
%\begin{aligned}
%& R_{Jen}\left(\{\mathbf{Q}_k\}_{k=1}^K, \mathbf{\Theta}\right) \\
%& = \frac{1}{K+N_{cp}} \sum \limits_{k=1}^K \sum \limits_{i=1}^{N_{s1}} \log_2 \left( 1+ \frac{P_T}{\sigma^2} \frac{N_b N_u N_r^2 \sigma_{g,i}^2 \sigma_{t,i}^2}{L_g L_t} q_{k,i} d_{ut,k,i} d_{bg,k,i} d_{r,k,i} \right) \\
% & \quad + \frac{1}{K+N_{cp}} \sum \limits_{k=1}^K \sum \limits_{i=1}^{N_{s2}} \log_2 \left( 1+ \frac{P_T}{\sigma^2} \frac{N_b N_u \sigma_{h,i}^2}{L_h} q_{k,N_{s1}+i} d_{uh,k,i} d_{bh,k,i} \right),
%\end{aligned}
%\end{equation}
\begin{equation}
	\begin{footnotesize}
	\begin{aligned}
		& R_{Jen}\left(\{\mathbf{Q}_k\}_{k=1}^K, \mathbf{\Theta}\right) \\
		& = \frac{1}{K+N_{cp}} \sum \limits_{k=1}^K \left[\sum \limits_{i=1}^{N_{s2}} \log_2 \left( 1+ \frac{P_T}{\sigma^2} \frac{N_b N_u \sigma_{h,i}^2}{L_h} q_{k,N_{s1}+i} d_{uh,k,i} d_{bh,k,i} \right) \right. \\
		& \quad \left.+  \sum \limits_{i=1}^{N_{s1}} \log_2 \left( 1+ \frac{P_T}{\sigma^2} \frac{N_b N_u N_r^2 \sigma_{g,i}^2 \sigma_{t,i}^2}{L_g L_t} q_{k,i} d_{ut,k,i} d_{bg,k,i} d_{r,k,i} \right) \right] ,
	\end{aligned}
\end{footnotesize}
\end{equation}
where $d_{ut,k,i}$, $d_{bg,k,i}$, $d_{r,k,i}$, $d_{uh,k,i}$, and $d_{bh,k,i}$ are defined in \textbf{Lemma} \ref{theorem_app}. Compared with problem $\mathcal{P}_0$, problem $\mathcal{P}_1$ is much simpler, where the objective function is only composed of the summation of simple logarithm functions. However, the problem $\mathcal{P}_1$ is still challenging mainly due to two facts. First of all, the problem $\mathcal{P}_1$ is not jointly convex over $q_{k,i}$ and $\mathbf{\Theta}$. Thus, the optimal solution is very difficult to obtain. Secondly, the reflection coefficients $\mathbf{\Theta}$ are not directly related to the objective function. We need to build the bridge to connect $\mathbf{\Theta}$ and the objective function. In the following, AO is adopted to decouple the problem $\mathcal{P}_1$ into two sub-problems, and then different tools are employed to optimize them.

\subsection{Transmit Covariance Matrix Optimization Given RIS Reflection Coefficients}
Since $\mathbf{Q}_k$ is determined by \eqref{approx_qk}, the problem of optimizing transmit covariance matrix can be transformed into a power allocation problem. Specifically, when the reflection coefficients $\mathbf{\Theta}$ is fixed, the problem $\mathcal{P}_1$ is simplified as follows,
\begin{subequations}\label{opt_q}
\begin{align}
\max \limits_{ q_{k,i} } \quad & \frac{1}{K+N_{cp}} \sum \limits_{k=1}^K \sum \limits_{i=1}^{N_{s}} \log_2 \left(1+ \zeta_{k,i} q_{k,i} \right) \\
\text { s.t. } \quad & \sum \limits_{k=1}^K \sum \limits_{i=1}^{N_s}q_{k,i} \leq 1, \\
& q_{k,i} \geq 0, \forall i, k,
\end{align}
\end{subequations}
where
\begin{equation}
  \zeta_{k,i} =
  \begin{cases}
  \frac{P_T}{\sigma^2} \frac{N_b N_u N_r^2 \sigma_{g,i}^2 \sigma_{t,i}^2}{L_g L_t} d_{ut,k,i} d_{bg,k,i} d_{r,k,i}, & \mbox{ if $ 1 \leq i \leq N_{s1}$,} \\
  \frac{P_T}{\sigma^2} \frac{N_b N_u \sigma_{h,i}^2}{L_h} d_{uh,k,i} d_{bh,k,i}, & \mbox{ if $ N_{s1}+1 \leq i \leq N_{s}$.}
  \end{cases}
\end{equation}
The above problem is well studied in \cite{1045248}. It is found that joint space-frequency water-filling outperforms single dimensional water-filling either in the spatial domain or in the frequency domain. The optimal power allocation $q_{k,i}$ across subcarriers and streams  is given by
\begin{equation} \label{ao_q}
  q_{k,i} = \max\left(0, \frac{1}{q_0} - \frac{1}{\zeta_{k,i}}\right),
\end{equation}
where $q_0$ is the water level satisfying $\sum \limits_{k=1}^K \sum \limits_{i=1}^{N_s} q_{k,i} = 1$.

\subsection{Reflection Coefficient Optimization Given Transmit Covariance Matrix}
When the power allocations $q_{k,i}$ are fixed, the problem $\mathcal{P}_1$ can be rewritten as (with constant terms ignored)
\begin{subequations}\label{opt_ris}
\begin{align}
 \min \limits_{ \mathbf{\Theta} } \quad & - \sum \limits_{k=1}^K \sum \limits_{i=1}^{N_{s1}} \log_2 (1+ \eta_{k,i} d_{r,k,i}) \\
\text { s.t. } \quad & \mathbf{\Theta} =\operatorname{diag}\left(e^{j \theta_{1}}, e^{j \theta_{2}}, \cdots, e^{j \theta_{N_r}} \right),
\end{align}
\end{subequations}
where $\eta_{k,i} = \frac{P_T}{\sigma^2} \frac{N_b N_u N_r^2 \sigma_{g,i}^2 \sigma_{t,i}^2}{L_g L_t} q_{k,i} d_{ut,k,i} d_{bg,k,i}$. For better understanding, the information of $d_{r,k,i}$ is rewritten down here: $d_{r,k,i}$ are the descending ordered eigenvalues of $\mathbf{X}_{r,k}^H \mathbf{X}_{r,k}$, where $\mathbf{X}_{r,k} = \mathbf{A}_{rt,k}^H \mathbf{\Theta} \mathbf{A}_{rg,k}$, $\mathbf{X}_{r,k}^H \mathbf{X}_{r,k} = \mathbf{U}_{r,k} \mathbf{D}_{r,k} \mathbf{U}_{r,k}^H$, $\mathbf{D}_{r,k} = \operatorname{diag} (d_{r,k,1}, d_{r,k,2}, \ldots, d_{r,k,L_g})$. The above problem is highly challenging due to the following two facts. On the one hand, problem \eqref{opt_ris} is non-convex due to the unit-modulus constraints, and thus difficult to solve optimally. On the other hand, the objective function is connected to the optimization variable via eigenvalue decomposition (EVD) rather than being directly tied to it. A sub-optimal solution of $\mathbf{\Theta}$ can be obtained using the RCG algorithm which is widely applied in RIS-aided systems \cite{8982186,9234098}.
Define $\boldsymbol{\theta} = \left[ e^{j\theta_1}, e^{j\theta_2}, \ldots, e^{j\theta_{N_r}} \right]^T \in \mathbb{C}^{N_r\times 1}$ and $f(\boldsymbol{\theta}) = - \sum \limits_{k=1}^K \sum \limits_{i=1}^{N_{s1}} \log_2 (1+ \eta_{k,i} d_{r,k,i})$. Then, the feasible set of $\boldsymbol{\theta}$ forms a Riemannian manifold $\mathcal{M}= \{\boldsymbol{\theta} \in \mathbb{C}^{N_r \times 1}: |\boldsymbol{\theta}_{i}|=1, \forall i\}$ \cite{AbsilMahonySepulchre+2009}. The RCG algorithm usually has three key steps in each iteration.

1) \textsl{Riemannian Gradient}: The Riemannian gradient $\operatorname{grad} f(\boldsymbol{\theta})$ is the orthogonal projection of the Euclidean gradient $\nabla  f(\boldsymbol{\theta})$ onto the manifold:
\begin{equation} \label{grad}
\operatorname{grad} f(\boldsymbol{\theta})=\nabla f(\boldsymbol{\theta}) -\Re \left\{\nabla f(\boldsymbol{\theta}) \odot \boldsymbol{\theta}^*\right\} \odot \boldsymbol{\theta}.
\end{equation}
In the following, we will explore the Euclidean gradient of $f(\boldsymbol{\theta})$.

For simplicity, let us first consider subcarrier $k$. Specifically, following the chain rule, the $\ell$-th element of the gradient is given by
\begin{equation} \label{daoshu}
\begin{aligned}
	&\left( \frac{\partial f(\boldsymbol{\theta})}{\partial \overline{\theta}_{\ell}^H} \right) _k \\
	&= \left[\begin{array}{llll}
	\frac{\partial d_{r,k,1}}{\partial \overline{\theta}_{\ell}^H} & \frac{\partial d_{r,k,2}}{\partial \overline{\theta}_{\ell}^H} & \cdots & \frac{\partial d_{r,k,N_{s1}}}{\partial \overline{\theta}_{\ell}^H}
	\end{array}\right]\left[\begin{array}{c}
	\frac{\partial f(\boldsymbol{\theta})}{\partial d_{r,k,1}} \\
	\frac{\partial f(\boldsymbol{\theta})}{\partial d_{r,k,2}} \\
	\vdots \\
	\frac{\partial f(\boldsymbol{\theta})}{\partial d_{r,k,N_{s1}}}
	\end{array}\right],
\end{aligned}
\end{equation}
where $\overline{\theta}_\ell$ denotes the $\ell$-th element of $\boldsymbol{\theta}$,  $\left( \frac{\partial f(\boldsymbol{\theta})}{\partial \overline{\theta}_{\ell}^H} \right) _k$ denotes the $\ell$-th element of the  gradient on subcarrier $k$, and the overall gradient can be represented as
 \begin{subequations} \label{daoshu0}
 \begin{align}
    &\nabla  f(\boldsymbol{\theta}) \notag \\
    &= \left[\frac{\partial f(\boldsymbol{\theta})}{\partial \overline{\theta}_1^H}, \frac{\partial f(\boldsymbol{\theta})}{\partial \overline{\theta}_2^H}, \ldots, \frac{\partial f(\boldsymbol{\theta})}{\partial \overline{\theta}_{N_r}^H}\right]^T \\
    &= \left[ \sum \limits_{k=1}^K \left( \frac{\partial f(\boldsymbol{\theta})}{\partial \overline{\theta}_{1}^H} \right) _k, \sum \limits_{k=1}^K \left( \frac{\partial f(\boldsymbol{\theta})}{\partial \overline{\theta}_{2}^H} \right) _k, \ldots, \sum \limits_{k=1}^K \left( \frac{\partial f(\boldsymbol{\theta})}{\partial \overline{\theta}_{N_r}^H} \right) _k \right]^T.
 \end{align}
\end{subequations}
The partial derivative of $f(\boldsymbol{\theta})$ with respect to $d_{r,k,i}$ can be found by directly differentiating $f(\boldsymbol{\theta})$ to be
\begin{equation} \label{daoshu1}
  \frac{\partial f(\boldsymbol{\theta})}{\partial d_{r,k,i}} = -\frac{\eta_{k,i}} { \left( 1+ \eta_{k,i}d_{r,k,i} \right) \ln 2}
\end{equation}
Then, we will explore the partial derivative of $d_{r,k,i}$ with respect to the $\ell$-th reflection coefficient. First of all, $d_{r,k,i}$ can be expressed as
\begin{subequations}
\begin{align}
  d_{r,k,i} & = \mathbf{u}_{k,i}^H \mathbf{X}_{r,k}^H \mathbf{X}_{r,k} \mathbf{u}_{k,i} \\
  & = \mathbf{u}_{k,i}^H \mathbf{A}_{rg,k}^H \mathbf{\Theta}^H \mathbf{A}_{rt,k}\mathbf{A}_{rt,k}^H \mathbf{\Theta} \mathbf{A}_{rg,k} \mathbf{u}_{k,i},
\end{align}
\end{subequations}
where $\mathbf{u}_{k,i}$ refers to the $i$-th column of $\mathbf{U}_{r,k}$. Then, the partial derivative with respect to $\overline{\theta}_\ell^H$ can be calculated as
\begin{subequations}\label{daoshu3}
\begin{align}
&\frac{\partial d_{r,k,i}}{\partial \overline{\theta}_{\ell}^H} \notag \\
&= \frac{\partial \mathbf{u}_{k,i}^H}{\partial \overline{\theta}_{\ell}^H} \mathbf{X}_{r,k}^H \mathbf{X}_{r,k} \mathbf{u}_{k,i} + \mathbf{u}_{k,i}^H \frac{\partial \left(\mathbf{X}_{r,k}^H \mathbf{X}_{r,k}\right)}{\partial \overline{\theta}_{\ell}^H} \mathbf{u}_{k,i} \notag\\
&\quad+ \mathbf{u}_{k,i}^H \mathbf{X}_{r,k}^H \mathbf{X}_{r,k} \frac{\partial \mathbf{u}_{k,i}}{\partial \overline{\theta}_{\ell}^H} \\
&\overset{(a)}{=}   d_{r,k,i} \frac{\partial \mathbf{u}_{k,i}^H}{\partial \overline{\theta}_{\ell}^H} \mathbf{u}_{k,i} + \mathbf{u}_{k,i}^H \frac{\partial \left(\mathbf{X}_{r,k}^H \mathbf{X}_{r,k}\right)}{\partial \overline{\theta}_{\ell}^H} \mathbf{u}_{k,i} + d_{r,k,i} \mathbf{u}_{k,i}^H \frac{\partial \mathbf{u}_{k,i}}{\partial \overline{\theta}_{\ell}^H} \\
&\overset{(b)}{=}   \mathbf{u}_{k,i}^H \frac{\partial \left(\mathbf{X}_{r,k}^H \mathbf{X}_{r,k}\right)}{\partial \overline{\theta}_{\ell}^H} \mathbf{u}_{k,i} \\
&= \mathbf{u}_{k,i}^H \frac{\partial \left(\mathbf{A}_{rg,k}^H \mathbf{\Theta}^H \mathbf{A}_{rt,k}\mathbf{A}_{rt,k}^H \mathbf{\Theta} \mathbf{A}_{rg,k}\right)}{\partial \overline{\theta}_{\ell}^H} \mathbf{u}_{k,i} ,
\end{align}
\end{subequations}
where (a) holds due to the fact that $\mathbf{X}_{r,k}^H \mathbf{X}_{r,k} \mathbf{u}_{k,i} = d_{r,k,i} \mathbf{u}_{k,i}$ and $\mathbf{u}_{k,i}^H \mathbf{X}_{r,k}^H \mathbf{X}_{r,k}= d_{r,k,i} \mathbf{u}_{k,i}^H$ by the definition of eigenvalues, and (b) follows $\mathbf{u}_{k,i}^H \mathbf{u}_{k,i} =1$.

The $(x,y)$-th entry of $\mathbf{X}_{r,k}^H \mathbf{X}_{r,k}$ can be represented as
\begin{equation}
\begin{small}	
\begin{aligned}	
  \left[\mathbf{X}_{r,k}^H \mathbf{X}_{r,k}\right]_{x,y} = \sum \limits_{m=1}^{N_r} \sum \limits_{n=1}^{N_r} \left[\mathbf{A}_{rg,k}^H \right]_{x,m} \overline{\theta}_m^H \left[ \mathbf{A}_{rt,k}\mathbf{A}_{rt,k}^H \right]_{m,n} \overline{\theta}_n \left[ \mathbf{A}_{rg,k}\right]_{n,y}  
\end{aligned}  
\end{small}  
\end{equation}
yielding the partial derivative with respect to $\overline{\theta}_\ell^H$
\begin{equation}
\begin{small}
\begin{aligned}
  \left[\frac{\partial \left(\mathbf{X}_{r,k}^H \mathbf{X}_{r,k}\right)}{\partial \overline{\theta}_{\ell}^H}\right]_{x,y} = \sum \limits_{n=1, n \neq \ell}^{N_r} \left[\mathbf{A}_{rg,k}^H \right]_{x,\ell} \left[ \mathbf{A}_{rt,k}\mathbf{A}_{rt,k}^H \right]_{\ell,n} \overline{\theta}_n \left[ \mathbf{A}_{rg,k}\right]_{n,y}
\end{aligned}  
\end{small}  
\end{equation}
Thus, the partial derivative of $\mathbf{X}_{r,k}^H \mathbf{X}_{r,k}$ with respect to $\overline{\theta}_\ell^H$ can be expressed as
\begin{equation} \label{daoshu2}
\begin{small}	
\begin{aligned}	
  \frac{\partial \left(\mathbf{X}_{r,k}^H \mathbf{X}_{r,k}\right)}{\partial \overline{\theta}_{\ell}^H} = \sum \limits_{n=1, n \neq \ell}^{N_r} \left[\mathbf{A}_{rg,k}^H \right]_{:,\ell} \otimes \left( \left[ \mathbf{A}_{rt,k}\mathbf{A}_{rt,k}^H \right]_{\ell,n} \overline{\theta}_n \left[ \mathbf{A}_{rg,k}\right]_{n,:} \right),
\end{aligned}  
\end{small}  
\end{equation}
where $\otimes$ denotes the Kronecker product operation.
Finally, by substituting \eqref{daoshu}, \eqref{daoshu1}, \eqref{daoshu3} and \eqref{daoshu2} back into \eqref{daoshu0}, we can obtain the Euclidean gradient of $f(\boldsymbol{\theta})$.
\begin{algorithm}[t]
    \caption{RCG Algorithm for Reflection Coefficients Optimization}
    \label{pgd}
    \begin{algorithmic}[1]
    %\Require b
    \State {\bf Input:} $\{\eta_{k,i}, \mathbf{A}_{rt,k}, \mathbf{A}_{rg,k}\}$, desired accuracy $\epsilon$
    \State Initialize: $\boldsymbol{\theta}_{0}$, $\boldsymbol{\eta}_0 = - \operatorname{grad} f(\boldsymbol{\theta}_{0})$  and set $i=0$;
    \Repeat
            \State Choose the Armijo backtracking line search step size $\alpha$;
            \State  Update $\boldsymbol{\theta}$ by using \eqref{retraction};
            \State Update the Riemannian gradient by using \eqref{grad};
            \State Update the research direction by using \eqref{direction};
            \State $i \leftarrow i+1$;
    \Until $\|\operatorname{grad} f(\boldsymbol{\theta}_i) \|_2 \leq \epsilon$.
    \end{algorithmic}
 \end{algorithm}

2) \textsl{Search Direction}: The current search direction $\boldsymbol{\eta}$ can be found as
\begin{equation} \label{direction}
  \boldsymbol{\eta} = -\operatorname{grad} f(\boldsymbol{\theta}) + \beta \mathcal{T} (\overline{\boldsymbol{\eta}}),
\end{equation}
where $\beta$ is chosen as the Polak-Ribiere parameter to achieve fast convergence \cite{AbsilMahonySepulchre+2009}, $\overline{\boldsymbol{\eta}}$ is the previous search direction, and $\mathcal{T} (\cdot)$ is a transport operation defined as
\begin{equation}
  \mathcal{T}(\boldsymbol{\eta}) = \boldsymbol{\eta} - \Re \{ \boldsymbol{\eta}\odot \boldsymbol{\theta}^* \} \odot \boldsymbol{\theta}.
\end{equation}

3) \textsl{Retraction}: The retraction operation is to project the tangent vector back to the manifold:
\begin{equation} \label{retraction}
  \boldsymbol{\theta} \leftarrow \operatorname{unt} (\boldsymbol{\theta} + \alpha \boldsymbol{\eta}),
\end{equation}
where $\alpha$ is the Armijo backtracking line search step size \cite{AbsilMahonySepulchre+2009}.

The key steps are introduced above, and the consequent RCG algorithm  for reflection coefficients optimization is summarized in \textbf{Algorithm} \ref{pgd}, which is guaranteed to converge to a stationary point \cite{AbsilMahonySepulchre+2009}.

\section{Summary and Discussions}
\begin{algorithm}[t]
    \caption{Alternating Optimization Algorithm for Problem $\mathcal{P}_1$}
    \label{alg_overall}
    \begin{algorithmic}[1]
    \State Initialize: $q_{k,i}^{(0)}= \frac{1}{KN_s} $, $\mathbf{\Theta}^{(0)}$ is randomly generated where the phases $\{\theta_\ell\}_{\forall \ell}$ are uniformly and independently distributed in $[0,2\pi)$, error tolerance $\epsilon$, and set $i=0$;
    \Repeat
        \State Calculate $q_{k,i}^{(i+1)}$ according to \eqref{ao_q} with fixed $\mathbf{\Theta}^{(i)}$;
        \State Calculate $\mathbf{\Theta}^{(i+1)}$ based on \textbf{Algorithm} \ref{pgd} with fixed $q_{k,i}^{(i+1)}$;
        \State $i \leftarrow i+1$;
    \Until The increase of the objective value of the problem $\mathcal{P}_1$ is below the threshold $\epsilon$.
    \end{algorithmic}
 \end{algorithm}
In this section, we will first discuss the convergence and the complexity of the proposed algorithm, and then compare it with other state-of-the-art methods \cite{9234098, nuti2021spectral}. Note that the algorithm in \cite{9234098} that is originally designed for the narrowband systems is extended to the wideband systems here. In addition, it is worth noting that our proposed algorithm relies on the statistical CSI, while the algorithms in \cite{9234098, nuti2021spectral} depend on the instantaneous CSI.
\subsection{Converge and Computational Complexity Analysis}
%\begin{algorithm}[t]
%    \caption{Alternating Optimization Algorithm for Problem $\mathcal{P}_1$}
%    \label{alg_overall}
%    \begin{algorithmic}[1]
%    \State Initialize: $q_{k,i}^{(0)}= \frac{1}{KN_s} $, $\mathbf{\Theta}^{(0)}$ is randomly generated where the phases $\{\theta_\ell\}_{\forall \ell}$ are uniformly and independently distributed in $[0,2\pi)$, error tolerance $\epsilon$, and set $i=0$;
%    \Repeat
%        \State Calculate $q_{k,i}^{(i+1)}$ according to \eqref{ao_q} with fixed $\mathbf{\Theta}^{(i)}$;
%        \State Calculate $\mathbf{\Theta}^{(i+1)}$ based on \textbf{Algorithm} \ref{pgd} with fixed $q_{k,i}^{(i+1)}$;
%        \State $i \leftarrow i+1$;
%    \Until The increase of the objective value of the problem $\mathcal{P}_1$ is below the threshold $\epsilon$.
%    \end{algorithmic}
% \end{algorithm}
In summary, the proposed algorithm is illustrated in \textbf{Algorithm} \ref{alg_overall}. Let us first consider the convergence of the proposed algorithm. We adopt the AO framework in \textbf{Algorithm} \ref{alg_overall}. First of all, the optimal power allocation is obtained via \eqref{ao_q} with fixed reflection coefficients. Then, the sub-optimal reflection coefficients are obtained via \textbf{Algorithm} \ref{pgd} with fixed power allocation. Therefore, the objective value of the problem $\mathcal{P}_1$ is non-decreasing over iterations and the proposed algorithm is guaranteed to converge.

Next, let us consider the complexity of the proposed \textbf{Algorithm} \ref{alg_overall}. Firstly, the complexity of the water-filling algorithm is dominated by the procedure of SVD. Note that $\mathbf{Q}_k$ is determined by \eqref{approx_qk}, and optimizing $\mathbf{Q}_k$ is transformed into power allocation in \eqref{ao_q} whose complexity can be neglected. Secondly, the complexity of reflection coefficient optimization is dominated by the calculation of the gradient in \eqref{daoshu0}. The complexity of calculating \eqref{daoshu3} is $\mathcal{O}( N_r L_g^2)$. The complexity of calculating \eqref{daoshu0} is $\mathcal{O}(K N_{s_1}  N_r^2 L_g^2 )$. And the complexity of calculating \eqref{approx_qk} is $\mathcal{O}(N_b (L_g^2 +L_h^2))$. Note that $\mathbf{Q}_k$ in \eqref{approx_qk} only needs to be calculated once. Thus, the overall complexity of the proposed \textbf{Algorithm} \ref{alg_overall} is on the order of $\mathcal{O}(I_1 I_2 K N_{s_1}  N_r^2 L_g^2 + KN_b (L_g^2 +L_h^2))$, where $I_1$ and $I_2$ denote the number of iterations of \textbf{Algorithm} \ref{pgd} and \textbf{Algorithm} \ref{alg_overall} required to converge, respectively.

\subsection{Computational Complexity Comparison}
\begin{table*}[!t]
\centering
\caption{Algorithm Comparison.}  \label{comp_alg}
\begin{tabular}{|c|c|c|}
\hline Algorithm & Dominant computational complexity & Type of CSI \\
\hline Proposed algorithm & $\mathcal{O}\left(I_1 I_2  KN_{s_1} N_r^2 L_g^2+ K N_b(L_g^2+L_h^2) \right)$ & statistical CSI \\
\hline PGA-based algorithm in \cite{nuti2021spectral} & $\mathcal{O}\left( KI \left(N_r N_b N_u+ N_b^2 N_u\right) \right)$ & instantaneous CSI \\
\hline T-SVD-BF in \cite{9234098} & $\mathcal{O}\left(K N_r L_{1}+ K N_{u}^{2}  N_{b}\right) $ & instantaneous CSI\\
\hline
\end{tabular}
\end{table*}
\begin{table*} [!t]
\centering
\caption{Computational Time Comparison.} \label{compu_time}
\begin{tabular}{|c|c|c|c|c|}
\hline \multirow{2}{*} {} & \multicolumn{4}{|c|} {\text { Running time (s) }} \\
\cline { 2 - 5 } & $N_r=10\times 5$ & $N_r=10\times 10$ & $N_r=10\times 20$ & $N_r=10\times 40$ \\
\hline \text { Proposed algorithm } & 1.88 & 7.47  & 27.67  & 107.80  \\
\hline \text { PGA-based algorithm in \cite{nuti2021spectral} } & 0.84 &1.37 & 3.04 & 15.19 \\
\hline \text { T-SVD-BF in \cite{9234098} } & 0.05  & 0.07  & 0.11  & 0.63   \\
\hline
\end{tabular}
\end{table*}
In this subsection, we compare the computational complexity with other state-of-the-art methods \cite{9234098, nuti2021spectral} that assume the knowledge of perfect instantaneous CSI.
\begin{itemize}
  \item  Projected gradient ascent (PGA)-based algorithm  \cite{nuti2021spectral}: The AO method is adopted, where PGA is utilized to optimize the reflection coefficients, and the spatial-frequency water-filling method is used to optimize the transmit covariance matrix.
  \item Truncated-SVD-based beamforming (T-SVD-BF) \cite{9234098}: A manifold-based algorithm is firstly adopted to optimize the reflection coefficients, followed by the water-filling algorithm to handle the active beamforming at the BS. Note that each stage only needs to be carried out once, and there is no need for an alternating process.
\end{itemize}
The complexity of PGA-based algorithm \cite{nuti2021spectral} which is designed for general channels is on the order of $\mathcal{O}\left(KI\left( N_r(N_u N_b+ N_u^2) + N_b N_u^2+ N_b^2 N_u +N_u^3  \right)\right)$, where $I$ denotes the number of iterations required to converge. The authors in \cite{9234098} propose the T-SVD-BF algorithm specifically for mmWave systems which utilizes the inherent sparse structure of mmWave channels. The complexity of T-SVD-BF \cite{9234098} is on the order of $\mathcal{O}(K N_r L_1 + K N_b N_u \min (N_b, N_u) )$, where $L_1$ denotes the number of iterations required to converge when optimizing the reflection coefficients. Considering the case $N_b > N_u$, the dominant computational complexity of the respective algorithms is summarized in Table \ref{comp_alg}. We further compare the running time for various value of $N_r$ in Table \ref{compu_time}. The system settings are the same as that in Section \ref{sec_tightness}, where $N_b=100$, $N_u=16$, and $L_g=L_t=L_h=6$. The simulations are carried out on a computer with Intel i7-7700 CPU at 3.60 GHz and with 16.0 GB RAM. We can find that the complexity of T-SVD-BF in  \cite{9234098} is the lowest, while our proposed algorithm has the highest complexity because we have to calculate the element-wise gradient in Eq. \eqref{daoshu0}.

\section{Simulation Results}
\begin{figure}[t]
\begin{centering}
%\vspace{-0.4cm}
\includegraphics[width=.4\textwidth]{./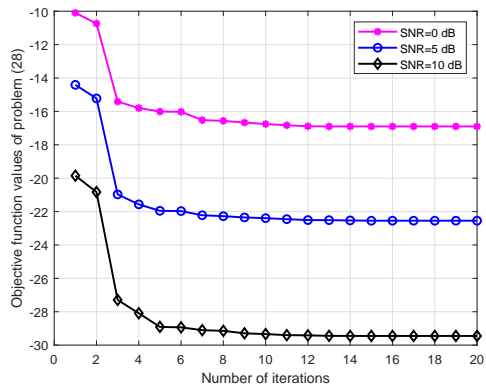}
 \caption{{Convergence of the proposed \textbf{Algorithm} \ref{pgd}.}}\label{fig_conv_rcg}
\end{centering}
%\vspace{-0.8cm}
\end{figure}

\begin{figure}[t]
\begin{centering}
%\vspace{-0.4cm}
\includegraphics[width=.4\textwidth]{./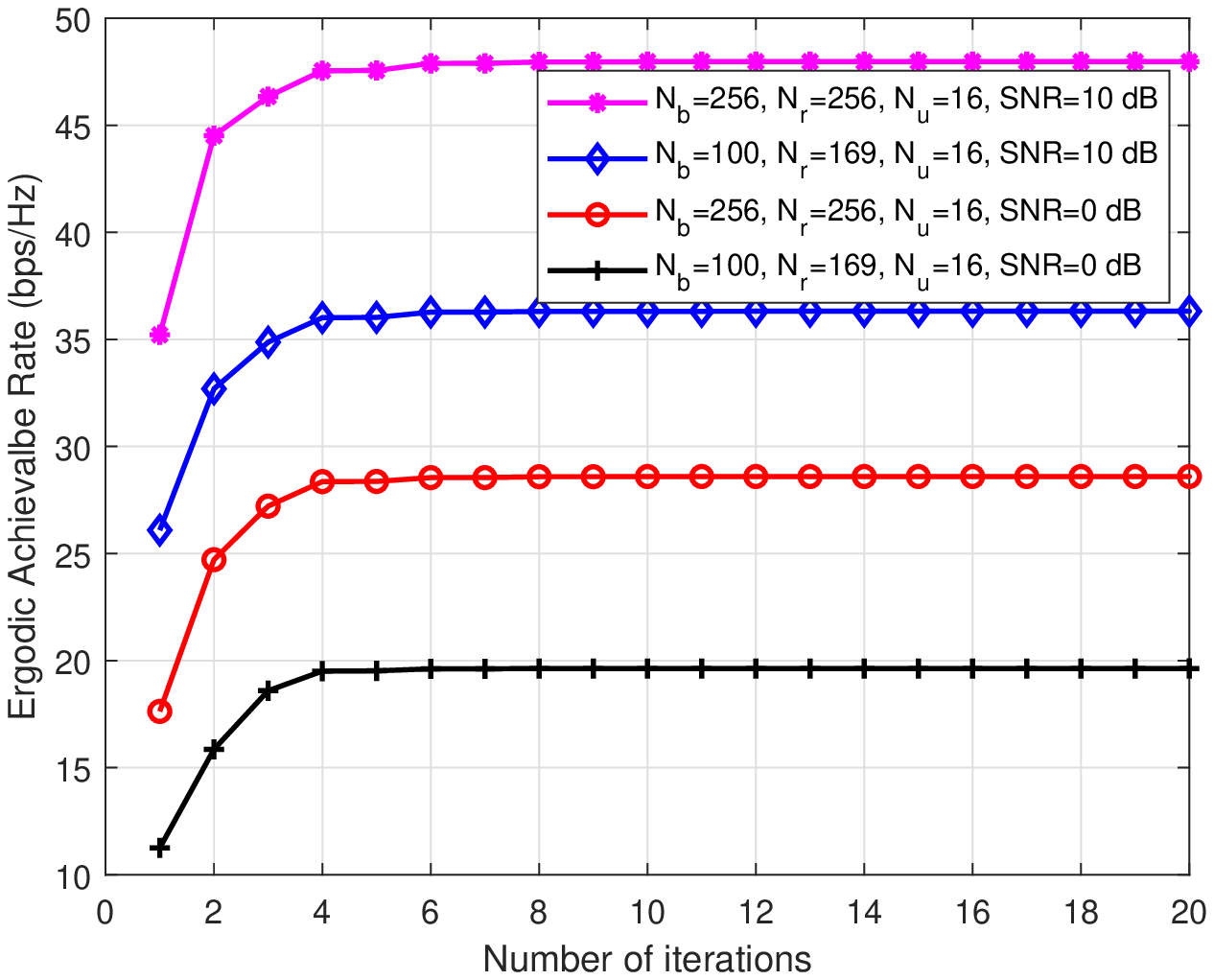}
 \caption{{Convergence of the proposed \textbf{Algorithm} \ref{alg_overall}.}}\label{fig_conv}
\end{centering}
%\vspace{-0.8cm}
\end{figure}

\begin{figure}[t]
\begin{centering}
\includegraphics[width=.4\textwidth]{./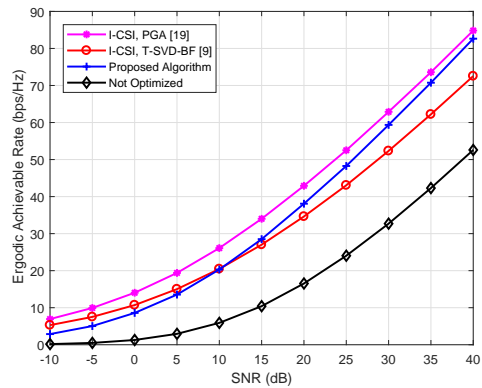}
 \caption{{Performance comparison with benchmarks, I-CSI denotes instantaneous CSI.}}\label{fig_bench}
\end{centering}
%\vspace{-0.8cm}
\end{figure}
In this section, we evaluate the performance of the proposed algorithms for ergodic achievable rate optimization. The system settings are the same as that in Section \ref{sec_tightness}. Other system parameters are set as follows unless specified otherwise later: $N_b=100$, $N_r = 13 \times 13$, $N_u=16$, and $L_h=L_g=L_t=6$. All curves are averaged over 100 sets of independent realizations of the angles. For each set of angle realization, 1000 independent realizations of the path gains are generated for the Monte-Carlo simulations.

Fig.~\ref{fig_conv_rcg} illustrates the convergence performance of the proposed \textbf{Algorithm} \ref{pgd} where $q_{k,i}$ is fixed at $\frac{1}{KN_s}$ in different SNR scenarios. We can find that \textbf{Algorithm} \ref{pgd} converges in around 10 iterations.
Fig.~\ref{fig_conv} displays the convergence performance of the proposed \textbf{Algorithm} \ref{alg_overall}. It is found that \textbf{Algorithm} \ref{alg_overall} converges in about 6 iterations, which confirms the convergence of the proposed algorithm.

In Fig.~\ref{fig_bench}, the proposed algorithm is compared with some benchmarks, where PGA and T-SVD-BF refer to the algorithm in \cite{nuti2021spectral} and \cite{9234098}, respectively, and ``Not Optimized'' denotes the scenario where $q_{k,i} = \frac{1}{K N_s}$ and $\mathbf{\Theta}$ is randomly generated. It is interesting that our proposed algorithm performs well although the derived approximation $R_{Jen}\left(\{\mathbf{Q}_k\}_{k=1}^K, \mathbf{\Theta}\right)$ of the ergodic achievable rate  is larger than the Monte-Carlo results. On the one hand, the PGA-based algorithm in \cite{nuti2021spectral} performs best. The gap between our proposed algorithm and the PGA-based algorithm narrows as SNR increases. When SNR is 40 dB, our proposed algorithm is about 2 bps/Hz worse than the PGA-based algorithm. On the other hand, the proposed algorithm outperforms T-SVD-BF when SNR is larger than 10 dB. Both T-SVD-BF and our proposed algorithm utilize the asymptotic orthogonality of array response vectors. The T-SVD-BF algorithm requires both BS and user sides to employ a large number of antennas to approximate the SVD of the BS-RIS-user reflection channel. However, our proposed algorithm only requires the BS side to employ  a large number of antennas.  Nevertheless, the complexity of T-SVD-BF is the least and all the three algorithms improve the ergodic achievable rate greatly compared with the curve ``Not Optimized''. Note that our proposed algorithm only requires statistical  CSI, while the PGA-based algorithm and T-SVD-BF require instantaneous CSI. Overall, our proposed algorithm is quite favorable in practice considering the performance and CSI feedback overhead.

%\begin{figure}[t]
%\begin{centering}
%\vspace{-0.4cm}
%\includegraphics[width=.6\textwidth]{./}
% \caption{Influence comparison between transmit covariance matrix and reflection coefficients, Jensen's approximation.}\label{fig_comp_jen}
%\end{centering}
%\vspace{-0.8cm}
%\end{figure}

\begin{figure}[t]
\begin{centering}
\includegraphics[width=.4\textwidth]{./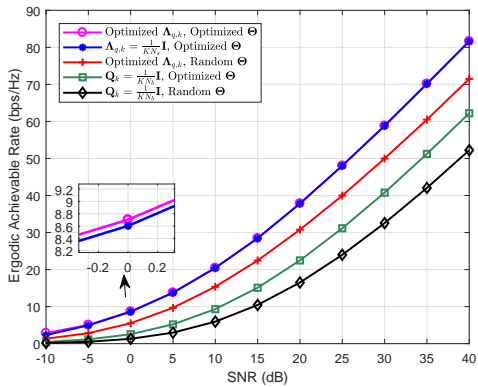}
 \caption{{Influence comparison between transmit covariance matrix and reflection coefficients.}}\label{fig_comp_monte}
\end{centering}
\vspace{-0.8cm}
\end{figure}

%\begin{figure}[ht]
%\centering
%\begin{minipage}[t]{0.48\textwidth}
%\centering
%\includegraphics[width=8cm]{comp_jen.eps}
% \caption{Influence comparison between transmit covariance matrix and reflection coefficients, Jensen's approximation.}\label{fig_comp_jen}
%\end{minipage}
%\begin{minipage}[t]{0.48\textwidth}
%\centering
%\includegraphics[width=8cm]{comp_monte.eps}
% \caption{Influence comparison between transmit covariance matrix and reflection coefficients, Monte-Carlo results.}\label{fig_comp_monte}
%\end{minipage}
%\end{figure}

We compare the influence on the ergodic achievable rate of the transmit covariance matrix and reflection coefficients in Fig.~\ref{fig_comp_monte}. When $\mathbf{Q}_k = \frac{1}{KN_b}\mathbf{I}$, the Jensen's approximation $R_{Jen} \left(\mathbf{\Theta} \bigg| \mathbf{Q}_k= \frac{1}{KN_b} \mathbf{I}_{N_b} \right)$ in Eq. \eqref{Qid} is adopted in the optimization.  We first find that the ergodic achievable rate of equal power allocation approaches that of the optimized power allocation, especially when the SNR is greater than 10 dB. Then, compared with the curve ``$\mathbf{Q}_k = \frac{1}{KN_b}\mathbf{I}$, Random $\mathbf{\Theta}$'', the ergodic achievable rate can be improved by about 8 bps/Hz if only the reflection coefficients $\mathbf{\Theta}$ are optimized when SNR is larger than 30 dB. The ergodic achievable rate can be improved by about 16 bps/Hz if only the transmit covariance matrix is optimized, while the ergodic achievable rate can be improved by about 25 bps/Hz if both the reflection coefficients and the transmit covariance matrix are optimized. These observations indicate that the design of the reflection coefficients and the transmit covariance matrix plays a crucial role for ergodic achievable rate.

\section{Conclusion}
This paper studies the ergodic achievable rate maximization problem for a point-to-point RIS-assisted mmWave MIMO-OFDM communication system. First of all,  compact closed-form approximations of the ergodic achievable rate are derived by means of the majorization theory and Jensen's inequality. The approximations show that the ergodic achievable rate increases logarithmically with  the number of antennas at the BS and the user, the number of the reflection units at the RIS, the power allocation at the BS, as well as the eigenvalues of the array-response-related matrices associated with the BS, the user, and the RIS. Then, an AO-based algorithm is proposed to maximize the ergodic achievable rate by jointly optimizing the transmit covariance matrix at the BS and the reflection coefficients at the RIS, where the transmit covariance matrix is optimized by spatial-frequency water-filling and the reflection coefficients are optimized by the RCG algorithm. Simulation results validate  the effectiveness of the proposed algorithms. The ergodic achievable rate after optimization can be improved by about 25 bps/Hz on average, which demonstrates the effectiveness of the optimization of the transmit covariance matrix and the reflection coefficients.

\appendix
\begin{appendices}
\subsection{Proof of Proposition 1} \label{append_pro1}
The AoDs of different scattering paths can be considered as continuous random variables that are independent from each other. Then, the event $E=\{\psi_{g,b,i} \neq \psi_{h,b,j}, \forall i\in \{1,2,\ldots,L_g\}, \linebreak[4]\forall j \in \{1,2,\ldots, L_h\} \}$ occurs with probability one. Therefore, by the asymptotic orthogonality of ULA array response vectors \cite{9234098, 6831723}, we have
\begin{equation}
  \mathbf{A}_{bg,k}^H \mathbf{A}_{bh,k} \rightarrow \mathbf{0}^{L_g \times L_h}, \quad \text{as}\; N_b \rightarrow \infty.
\end{equation}
Consequently, $ \mathbf{G}_k \mathbf{H}_k^H = \mathbf{A}_{rg,k} \mathbf{G}_{L,k} \mathbf{A}_{bg,k}^H \mathbf{A}_{bh,k} \mathbf{H}_{L,k}^H \mathbf{A}_{uh,k}^H \rightarrow \mathbf{0}^{N_r \times N_u} $.

\subsection{Proof of Lemma 1} \label{append_theorem1}
Let us first focus on subcarrier $k$, and then extend the results to all of the subcarriers. Define the SVD of $\mathbf{H}_{\text{eff},k}$ as $\mathbf{H}_{\text{eff},k} = \mathbf{U}\boldsymbol{\Sigma}\mathbf{V}^H$. If the instantaneous CSI $\mathbf{H}_{\text{eff},k}$ is known, the optimal transmit covariance matrix $\mathbf{Q}_k$ is given by
\begin{equation}
  \mathbf{Q}_k = \mathbf{V}_1 \boldsymbol{\Lambda}_k \mathbf{V}_1^H,
\end{equation}
where $\mathbf{V}_1$ represents the first $N_s$ columns of $\mathbf{V}$, $N_s$ represents the rank of $\mathbf{H}_{\text{eff},k}$, $\boldsymbol{\Lambda}_k = \operatorname{diag}(q_{k,1}, q_{k,2}, \ldots, q_{k,N_s})$ where $q_{k,i}$ represents the optimal amount of power allocated to the $i$-th data stream. Then, we have
\begin{subequations}
\begin{align}
R_k & = \mathbb{E}_{\mathbf{H}_{\text{eff},k}} \left[\log_2 \operatorname{det} \left( \mathbf{I}_{N_u}+ \frac{P_T}{\sigma^2} \mathbf{H}_{\text{eff},k} \mathbf{Q}_k \mathbf{H}_{\text{eff},k}^H \right)\right] \\
&= \mathbb{E}_{\mathbf{H}_{\text{eff},k}}\left[\sum \limits_{i=1}^{N_s} \log _{2} \left( 1+ \frac{P_T}{\sigma^{2}} q_{k,i} |\boldsymbol{\Sigma}(i,i)|^2 \right)\right]\\
  &= \mathbb{E}_{\mathbf{H}_{\text{eff},k}}\left[\sum \limits_{i=1}^{N_s} \log _{2} \left( 1+ \frac{P_T}{\sigma^{2}} q_{k,i} \lambda_i\left(\mathbf{H}_{\text{eff},k} \mathbf{H}_{\text{eff},k}^H\right) \right)\right]\\
  &\overset{(a)}{\approx} \mathbb{E}_{\mathbf{H}_{\text{eff},k}}\left[\sum \limits_{i=1}^{N_s} \log _{2} \left( 1+
\frac{P_T}{\sigma^{2}} q_{k,i} \lambda_i\left( \mathbf{T}_k \mathbf{\Theta}\mathbf{G}_k \mathbf{G}_k^H \mathbf{\Theta}^H \mathbf{T}_k^H \right.\right.\right. \notag\\
& \quad\quad\quad \left. + \mathbf{H}_k \mathbf{H}_k^H \right) \Big) \bigg]\\
&\overset{(b)}{\approx} \mathbb{E}_{\mathbf{H}_{\text{eff},k}}\left[ \sum \limits_{i=1}^{N_{s1}} \log _{2} \left( 1+
\frac{P_T}{\sigma^{2}} q_{k,i} \lambda_i\left( \mathbf{T}_k \mathbf{\Theta} \mathbf{G}_k \mathbf{G}_k^H \mathbf{\Theta}^H \mathbf{T}_k^H \right) \right) \right. \notag \\
& \quad\quad\quad  \left. + \sum \limits_{i=1}^{N_{s2}} \log _{2} \left( 1+
\frac{P_T}{\sigma^{2}} q_{k,i +N_{s_1}}  \lambda_i\left( \mathbf{H}_k \mathbf{H}_k^H \right) \right) \right],
\end{align}
\end{subequations}
where $R_k$ refers to the ergodic achievable rate at subcarrier $k$, $\lambda_i(\cdot)$ refers to the $i$-th largest eigenvalue of the input matrix, $(a)$ holds due to the \textbf{Proposition} \ref{prop_orthogo}, $(b)$ holds due to the asymptotic orthogonality of $\mathbf{G}_k$ and $\mathbf{H}_k^H$, $N_{s1}$ is the rank of the BS-RIS-user cascade channel, $N_{s2}$ is the rank of the direct BS-user channel, and it follows that $N_s = N_{s1}+N_{s2}$.

Define $\mathbf{X}_{r,k} = \mathbf{A}_{rt,k}^H \mathbf{\Theta} \mathbf{A}_{rg,k}$, and assume the EVD of $\mathbf{A}_{ut,k}^H \mathbf{A}_{ut,k}$, $\mathbf{A}_{bg,k}^H \mathbf{A}_{bg,k}$, $\mathbf{X}_{r,k}^H \mathbf{X}_{r,k}$, $\mathbf{A}_{uh,k}^H \mathbf{A}_{uh,k}$ and $\mathbf{A}_{bh,k}^H \mathbf{A}_{bh,k}$ can be expressed as follows,
%\begin{equation*}
%\begin{split}
% & \mathbf{A}_{ut,k}^H \mathbf{A}_{ut,k} = \mathbf{U}_{ut,k}^H \mathbf{D}_{ut,k} \mathbf{U}_{ut,k}, \\
%& \mathbf{A}_{bg,k}^H \mathbf{A}_{bg,k} = \mathbf{U}_{bg,k}^H \mathbf{D}_{bg,k} \mathbf{U}_{bg,k}, \\
% \end{split}
%\end{equation*}
%\begin{equation}
%\begin{split}
%& \mathbf{X}_{r,k}^H \mathbf{X}_{r,k} = \mathbf{U}_{r,k}^H \mathbf{D}_{r,k} \mathbf{U}_{r,k}, \\
%& \mathbf{A}_{uh,k}^H \mathbf{A}_{uh,k} = \mathbf{U}_{uh,k}^H \mathbf{D}_{uh,k} \mathbf{U}_{uh,k}, \\
%& \mathbf{A}_{bh,k}^H \mathbf{A}_{bh,k} = \mathbf{U}_{bh,k}^H \mathbf{D}_{bh,k} \mathbf{U}_{bh,k},
% \end{split}
%\end{equation}
\begin{subequations}
\begin{align}
 & \mathbf{A}_{ut,k}^H \mathbf{A}_{ut,k} = \mathbf{U}_{ut,k}^H \mathbf{D}_{ut,k} \mathbf{U}_{ut,k}, \\
& \mathbf{A}_{bg,k}^H \mathbf{A}_{bg,k} = \mathbf{U}_{bg,k}^H \mathbf{D}_{bg,k} \mathbf{U}_{bg,k}, \\
& \mathbf{X}_{r,k}^H \mathbf{X}_{r,k} = \mathbf{U}_{r,k}^H \mathbf{D}_{r,k} \mathbf{U}_{r,k}, \\
& \mathbf{A}_{uh,k}^H \mathbf{A}_{uh,k} = \mathbf{U}_{uh,k}^H \mathbf{D}_{uh,k} \mathbf{U}_{uh,k}, \\
& \mathbf{A}_{bh,k}^H \mathbf{A}_{bh,k} = \mathbf{U}_{bh,k}^H \mathbf{D}_{bh,k} \mathbf{U}_{bh,k},
 \end{align}
\end{subequations}
where $\mathbf{U}_{ut,k}$, $\mathbf{U}_{bg,k}$, $\mathbf{U}_{r,k}$, $\mathbf{U}_{uh,k}$, and $\mathbf{U}_{bh,k}$ are the eigenvectors of $\mathbf{A}_{ut,k}^H \mathbf{A}_{ut,k}$, $\mathbf{A}_{bg,k}^H \mathbf{A}_{bg,k}$, $\mathbf{X}_{r,k}^H \mathbf{X}_{r,k}$, $\mathbf{A}_{uh,k}^H \mathbf{A}_{uh,k}$ and $\mathbf{A}_{bh,k}^H \mathbf{A}_{bh,k}$, respectively; $ \mathbf{D}_{ut,k} = \operatorname{diag} (d_{ut,k,1}, d_{ut,k,2}, \ldots, d_{ut,k,L_t})$, $\mathbf{D}_{bg,k} = \operatorname{diag} (d_{bg,k,1}, d_{bg,k,2}, \ldots, d_{bg,k,L_g})$, $\mathbf{D}_{r,k} = \operatorname{diag} (d_{r,k,1}, d_{r,k,2}, \ldots, d_{r,k,L_g})$, $\mathbf{D}_{uh,k} =  \operatorname{diag} (d_{uh,k,1}, d_{uh,k,2}, \ldots, d_{uh,k,L_h})$, $\mathbf{D}_{bh,k} = \operatorname{diag} (d_{bh,k,1}, d_{bh,k,2}, \ldots, d_{bh,k,L_h})$, and $d_{ut,k,i}, d_{bg,k,i}, d_{r,k,i},  d_{uh,k,i}, d_{bh,k,i} \geq 0$  are the eigenvalue of $\mathbf{A}_{ut,k}^H \mathbf{A}_{ut,k}$, $\mathbf{A}_{bg,k}^H \mathbf{A}_{bg,k}$, $\mathbf{X}_{r,k}^H \mathbf{X}_{r,k}$, $\mathbf{A}_{uh,k}^H \mathbf{A}_{uh,k}$ and $\mathbf{A}_{bh,k}^H \mathbf{A}_{bh,k}$ in descending order, respectively.

{Using the results in \cite[Lemma 1]{li2022risassisted}} and \cite[Theorem 1]{8816689}, we have
%\begin{equation} \label{ergodic_app_subk}
%\begin{aligned}
%R_{app,k} &= \mathbb{E}_{\lambda} \left[ \log_2 \operatorname{det} \left( \mathbf{I}_{N_{s2}}+ \frac{P_T}{\sigma^2} \boldsymbol{\Lambda}^2_k \odot \boldsymbol{\lambda}(\mathbf{A}_{uh,k}^H \mathbf{A}_{uh,k}) \odot \boldsymbol{\lambda} (\mathbf{A}_{bh,k}^H \mathbf{A}_{bh,k}) \odot \boldsymbol{\lambda} (\mathbf{H}_{L,k}^H \mathbf{H}_{L,k})   \right) \right]\\
%&\quad +\mathbb{E}_{\lambda} \left[ \log_2 \operatorname{det} \left( \mathbf{I}_{N_{s1}}+ \frac{P_T}{\sigma^2} \boldsymbol{\Lambda}^1_k \odot \boldsymbol{\lambda}(\mathbf{A}_{ut,k}^H \mathbf{A}_{ut,k})\odot \boldsymbol{\lambda}(\mathbf{A}_{bg,k}^H \mathbf{A}_{bg,k}) \odot \boldsymbol{\lambda}(\mathbf{X}_{r,k}^H \mathbf{X}_{r,k}) \right. \right. \\
% & \quad \left. \left. \odot \boldsymbol{\lambda}(\mathbf{T}_{L,k}^H \mathbf{T}_{L,k})\odot \boldsymbol{\lambda} (\mathbf{G}_{L,k}^H \mathbf{G}_{L,k}) \right)\right] \\
%&= \mathbb{E}_{\alpha_{g}, \alpha_{t}} \left[\sum \limits_{i=1}^{N_{s1}} \log_2 \left( 1+ \frac{P_T}{\sigma^2} \frac{N_b N_u N_r^2}{L_g L_t} q_{k,i} d_{ut,k,i} d_{bg,k,i} d_{r,k,i} |\alpha_{g,i}|^2 |\alpha_{t,i}|^2 \right) \right] \\
% &\quad  + \mathbb{E}_{\alpha_{h}} \left[\sum \limits_{i=1}^{N_{s2}} \log_2 \left( 1+ \frac{P_T}{\sigma^2} \frac{N_b N_u}{L_h} q_{k, N_{s1}+i} d_{uh,k,i} d_{bh,k,i} |\alpha_{h,i}|^2 \right) \right]
%\end{aligned}
%\end{equation}
\begin{equation} \label{ergodic_app_subk}
	\begin{aligned}
		R_{app,k} &= \mathbb{E}_{\lambda} \left[ \log_2 \operatorname{det} \left( \mathbf{I}_{N_{s2}}+ \frac{P_T}{\sigma^2} \boldsymbol{\Lambda}^2_k \odot \boldsymbol{\lambda}(\mathbf{A}_{uh,k}^H \mathbf{A}_{uh,k}) \right.\right. \\
		& \quad  \odot \boldsymbol{\lambda} (\mathbf{A}_{bh,k}^H \mathbf{A}_{bh,k}) \odot \boldsymbol{\lambda} (\mathbf{H}_{L,k}^H \mathbf{H}_{L,k})   \Big) \bigg]\\
		&\quad +\mathbb{E}_{\lambda} \left[ \log_2 \operatorname{det} \left( \mathbf{I}_{N_{s1}}+ \frac{P_T}{\sigma^2} \boldsymbol{\Lambda}^1_k \odot \boldsymbol{\lambda}(\mathbf{A}_{ut,k}^H \mathbf{A}_{ut,k}) \right. \right. \\
		& \quad \odot \boldsymbol{\lambda}(\mathbf{A}_{bg,k}^H \mathbf{A}_{bg,k}) \odot \boldsymbol{\lambda}(\mathbf{X}_{r,k}^H \mathbf{X}_{r,k}) \odot \boldsymbol{\lambda}(\mathbf{T}_{L,k}^H \mathbf{T}_{L,k}) \\
		& \quad \odot \boldsymbol{\lambda} (\mathbf{G}_{L,k}^H \mathbf{G}_{L,k}) \Big) \bigg] \\
		&= \mathbb{E}_{\alpha_{g}, \alpha_{t}} \left[\sum \limits_{i=1}^{N_{s1}} \log_2 \left( 1+ \frac{P_T}{\sigma^2} \frac{N_b N_u N_r^2}{L_g L_t} q_{k,i} d_{ut,k,i} \right.\right. \\
		& \quad \times d_{bg,k,i} d_{r,k,i} |\alpha_{g,i}|^2 |\alpha_{t,i}|^2 \Big) \bigg] \\
		&\quad  + \mathbb{E}_{\alpha_{h}} \left[\sum \limits_{i=1}^{N_{s2}} \log_2 \left( 1+ \frac{P_T}{\sigma^2} \frac{N_b N_u}{L_h} q_{k, N_{s1}+i} d_{uh,k,i} \right.\right. \\
		& \quad \times d_{bh,k,i} |\alpha_{h,i}|^2 \Big) \bigg]
	\end{aligned}
\end{equation}
where $R_{app,k}$ denotes the approximation of $R_k$, $\boldsymbol{\Lambda}^1_k = \operatorname{diag} (q_{1,k},q_{2,k}, \ldots, q_{N_{s1},k})$, $\boldsymbol{\Lambda}^2_k = \linebreak[4] \operatorname{diag} (q_{N_{s1}+1}, q_{N_{s1}+2}, \ldots, q_{Ns})$ and $\boldsymbol{\lambda}(\mathbf{Y}) = [\lambda_{1}(\mathbf{Y}),
\lambda_2 (\mathbf{Y}), \ldots, \lambda_{n}(\mathbf{Y})]^T$. {According to \cite[Corollary 1]{li2022risassisted} and \cite[Corollary 2]{8816689}, one condition of $R_{app,k}=R_k$ is that all steering vectors of $\mathbf{A}_{uh,k}$, $\mathbf{A}_{ut,k}$, $\mathbf{A}_{rt,k}$, $\mathbf{A}_{rg,k}$, $\mathbf{A}_{bg,k}$, and $\mathbf{A}_{bh,k}$ are composed of the columns of unitary matrices, e.g., the discrete Fourier matrices. However, all steering vectors are dependent of frequency. Therefore, $R_{app,k}, \forall k$ is never equal to $R_k, \forall k$ in wideband systems.}

{The result in \eqref{ergodic_app_subk} is only for one subcarrier.} Then, considering all of the subcarriers, the ergodic achievable rate can be approximated by Eq. \eqref{r_appx}.
\begin{figure*}[htb]
\begin{equation} \label{r_appx}
\begin{aligned}
\widetilde{R}\left(\{\mathbf{Q}_k\}_{k=1}^K, \mathbf{\Theta}\right)
 = & \; \mathbb{E}_{\alpha_{g}, \alpha_{t}} \left[ \frac{1}{K+N_{cp}} \sum \limits_{k=1}^K \sum \limits_{i=1}^{N_{s1}} \log_2 \left( 1+ \frac{P_T}{\sigma^2} \frac{N_b N_u N_r^2}{L_g L_t} q_{k,i} d_{ut,k,i} d_{bg,k,i} d_{r,k,i} |\alpha_{g,i}|^2 |\alpha_{t,i}|^2 \right) \right] \\
 & + \mathbb{E}_{\alpha_{h}} \left[ \frac{1}{K+N_{cp}} \sum \limits_{k=1}^K \sum \limits_{i=1}^{N_{s2}} \log_2 \left( 1+ \frac{P_T}{\sigma^2} \frac{N_b N_u}{L_h} q_{k,N_{s1}+i} d_{uh,k,i} d_{bh,k,i} |\alpha_{h,i}|^2 \right) \right].
\end{aligned}
\end{equation}
\end{figure*}

Note that the above results are based on the knowledge of the instantaneous CSI of $\mathbf{H}_{\text{eff},k}$ and the optimal design of the transmit covariance matrix $\mathbf{Q}_k$. However, only the statistical CSI is available in many practical systems and hence the assumption in this work. Meanwhile, we find that the optimization of the transmit covariance matrix is strongly related to the steering matrix at the BS. Therefore, one reasonable and sub-optimal design of the transmit covariance matrix $\mathbf{Q}_k$ is provided in Eq. \eqref{approx_qk}.

\subsection{Proof of Theorem 1}\label{append_theorem3}
Applying Jensen's inequality $\mathbb{E}\{\log_2 (1+x)\} \leq \log_2(1+\mathbb{E}\{x\})$ for $x\geq 0$ to \eqref{ergo_app}, we have Eq. \eqref{proof_theorem1},
\begin{figure*}[htb]
\begin{equation} \label{proof_theorem1}
\begin{aligned}
\widetilde{R}\left(\{\mathbf{Q}_k\}_{k=1}^K, \mathbf{\Theta}\right) & = \mathbb{E}_{\alpha_{g}, \alpha_{t}} \left[ \frac{1}{K+N_{cp}} \sum \limits_{k=1}^K \sum \limits_{i=1}^{N_{s1}} \log_2 \left( 1+ \frac{P_T}{\sigma^2} \frac{N_b N_u N_r^2}{L_g L_t} q_{k,i} d_{ut,k,i} d_{bg,k,i} d_{r,k,i} |\alpha_{g,i}|^2 |\alpha_{t,i}|^2 \right) \right] \\
 & \quad + \mathbb{E}_{\alpha_{h}} \left[ \frac{1}{K+N_{cp}} \sum \limits_{k=1}^K \sum \limits_{i=1}^{N_{s2}} \log_2 \left( 1+ \frac{P_T}{\sigma^2} \frac{N_b N_u}{L_h} q_{k,N_{s1}+i} d_{uh,k,i} d_{bh,k,i} |\alpha_{h,i}|^2 \right) \right]\\
& \leq  \frac{1}{K+N_{cp}} \sum \limits_{k=1}^K \sum \limits_{i=1}^{N_{s1}} \log_2 \left( 1+ \frac{P_T}{\sigma^2} \frac{N_b N_u N_r^2}{L_g L_t} q_{k,i} d_{ut,k,i} d_{bg,k,i} d_{r,k,i} \mathbb{E} \left(|\alpha_{g,i}|^2 |\alpha_{t,i}|^2\right) \right) \\
 & \quad + \frac{1}{K+N_{cp}} \sum \limits_{k=1}^K \sum \limits_{i=1}^{N_{s2}} \log_2 \left( 1+ \frac{P_T}{\sigma^2} \frac{N_b N_u}{L_h} q_{k,N_{s1}+i} d_{uh,k,i} d_{bh,k,i} \mathbb{E}\left(|\alpha_{h,i}|^2\right) \right) \\
& \overset{(a)}{=} \frac{1}{K+N_{cp}} \sum \limits_{k=1}^K \sum \limits_{i=1}^{N_{s1}} \log_2 \left( 1+ \frac{P_T}{\sigma^2} \frac{N_b N_u N_r^2 \sigma_{g,i}^2 \sigma_{t,i}^2}{L_g L_t} q_{k,i} d_{ut,k,i} d_{bg,k,i} d_{r,k,i} \right) \\
 & \quad + \frac{1}{K+N_{cp}} \sum \limits_{k=1}^K \sum \limits_{i=1}^{N_{s2}} \log_2 \left( 1+ \frac{P_T}{\sigma^2} \frac{N_b N_u \sigma_{h,i}^2}{L_h} q_{k,N_{s1}+i} d_{uh,k,i} d_{bh,k,i} \right) \\
& \triangleq {R}_{Jen}\left(\{\mathbf{Q}_k\}_{k=1}^K, \mathbf{\Theta}\right).
\end{aligned}
\end{equation}
\end{figure*}
where $(a)$ holds due to the following fact. $\alpha_{g,i} \sim \mathcal{CN}(0,\sigma_{g,i}^2)$, $\alpha_{t,i} \sim \mathcal{CN}(0, \sigma_{t,i}^2)$, and $\alpha_{g,i}$ and $\alpha_{t,i}$ are independent with each other. Then, $|\alpha_{g,i}|^2 \sim \exp (\frac{1}{\sigma_{g,i}^2})$, $|\alpha_{t,i}|^2 \sim \exp (\frac{1}{\sigma_{t,i}^2})$, and $\mathbb{E}(|\alpha_{g,i}|^2 |\alpha_{t,i}|^2) = \sigma_{g,i}^2 \sigma_{t,i}^2$. Similarly, $\mathbb{E}(|\alpha_{h,i}|^2) = \sigma_{h,i}^2$.

\subsection{Proof of Lemma 2} \label{append_pro2}
When $\mathbf{Q}_k =  \frac{1}{KN_b} \mathbf{I}_{N_b}$, we have
\begin{subequations}
\begin{align}
R_k & = \mathbb{E}_{\mathbf{H}_{\text{eff},k}} \left[\log_2 \operatorname{det} \left( \mathbf{I}_{N_u}+ \frac{P_T}{\sigma^2} \mathbf{H}_{\text{eff},k} \mathbf{Q}_k \mathbf{H}_{\text{eff},k}^H \right)\right] \\
& = \mathbb{E}_{\mathbf{H}_{\text{eff},k}} \left[\log_2 \operatorname{det} \left( \mathbf{I}_{N_u}+ \frac{P_T}{\sigma^2 K N_b} \mathbf{H}_{\text{eff},k} \mathbf{H}_{\text{eff},k}^H \right)\right] \\
  &= \mathbb{E}_{\mathbf{H}_{\text{eff},k}}\left[\sum \limits_{i=1}^{N_s} \log _{2} \left( 1+ \frac{P_T}{\sigma^{2} K N_b} \lambda_i\left(\mathbf{H}_{\text{eff},k} \mathbf{H}_{\text{eff},k}^H\right) \right)\right].
\end{align}
\end{subequations}
The following derivations are similar to Appendix \ref{append_theorem1}. Thus, it is omitted here.

\end{appendices}

\bibliographystyle{IEEEtran}
%\bibliography{ofdm}

\end{document}